\tikzstyle{bullet}=[circle,fill=black,minimum size=0.1cm,draw,inner sep=0cm]
\tikzstyle{every picture}+=[remember picture,baseline]
\newcommand{\blue}[1]{{\color{blue}#1}}
\crefname{section}{Section}{Sections}
\crefname{subsection}{Section}{Sections}
\crefname{appendix}{Appendix}{Appendices}
\crefname{figure}{Figure}{Figures}
\crefname{table}{Table}{Tables}
\crefname{equation}{}{}
\crefname{theorem}{Theorem}{Theorems}
\crefname{definition}{Definition}{Definitions}
\newcommand{\code}[1]{\texttt{#1}}
\newcommand{\proofcase}[1]{\medskip\noindent{\bf Case }{#1}{\bf :}}
\newcommand{\PHL}{\ensuremath{\mathtt{PHL}}}
\newcommand{\THL}{\ensuremath{\mathtt{THL}}}
\newcommand{\TRHL}{\ensuremath{\mathtt{TRHL}}}
\newcommand{\PRHL}{\ensuremath{\mathtt{PRHL}}}
\newcommand{\var}{\mathrm{Var}}
\newcommand{\val}{\mathrm{Val}}
\newcommand{\cmd}[1]{\code{#1}}
\newcommand{\cskip}{\cmd{skip}}
\newcommand{\assign}[2]{{#1}\,:=\,{#2}}
\newcommand{\ifelse}[3]{\cmd{if}\;{#1}\;\cmd{then}\;{#2}\;\cmd{else}\;{#3}}
\newcommand{\while}[2]{\cmd{while}\;{#1}\;\cmd{do}\;{#2}}
\newcommand{\config}[2]{\langle {#1}, {#2} \rangle}
\newcommand{\diverges}{\!\uparrow}
\newcommand{\converges}{\!\downarrow}
\newcommand{\dom}{dom}
\newcommand{\fv}{fv}
\newcommand{\trace}{\mathrm{n}}
\newcommand{\exc}{\to}
\newcommand{\state}{\mathrm{\sigma}}
\newcommand{\sem}[1]{[\![#1]\!]}
\newcommand{\posts}[2]{\mathsf{post}({#1})({#2})}
\newcommand{\htriple}[3]{\mbox{$\{#1\}\,#2\,\{#3\}$}}
\newcommand{\ihtriple}[3]{\mbox{$[#1]\,#2\,[#3]$}}
\newcommand{\cskiprule}{\mbox{(Skip)}}
\newcommand{\iskiprule}{\mbox{(RSkip)}}
\newcommand{\cyclskiponerule}{\mbox{(CSkip)}}
\newcommand{\cyclskiprule}{\mbox{(CSkip2)}}
\newcommand{\crskiponerule}{\mbox{(CRSkip)}}
\newcommand{\crskiprule}{\mbox{(CRSkip2)}}
\newcommand{\cassignrule}{\mbox{($:=$)}}
\newcommand{\iassignrule}{\mbox{(R:=)}}
\newcommand{\cycassignrule}{\mbox{(C:=)}}
\newcommand{\icycassignrule}{\mbox{(CR:=)}}
\newcommand{\csequencerule}{\mbox{(Seq)}}
\newcommand{\isequencerule}{\mbox{(RSeq)}}
\newcommand{\cconseqrule}{\mbox{($\models$)}}
\newcommand{\iconseqrule}{\mbox{(R$\models$)}}
\newcommand{\cycconseqrule}{\mbox{(C$\models$)}}
\newcommand{\crconseqrule}{\mbox{(C$\models$)}}
\newcommand{\idisjtrule}{\mbox{(RDisj)}}
\newcommand{\ccondrule}{\mbox{(If)}}
\newcommand{\icondtrule}{\mbox{(RIf1)}}
\newcommand{\icondfrule}{\mbox{(RIf2)}}
\newcommand{\icyccondtrule}{\mbox{(CRIf1)}}
\newcommand{\icyccondfrule}{\mbox{(CRIf2)}}
\newcommand{\cyccondrule}{\mbox{(CIf)}}
\newcommand{\cwhilerule}{\mbox{(Inv)}}
\newcommand{\cwhiletotalrule}{\mbox{(Inv-Total)}}
\newcommand{\iwhilerule}{\mbox{(RInv-Partial)}}
\newcommand{\iwhileskiprule}{\mbox{(RInv-0)}}
\newcommand{\iwhiletotalrule}{\mbox{(RInv-Total)}}
\newcommand{\icycwhileskiprule}{\mbox{(CRInv1)}}
\newcommand{\icycwhilerule}{\mbox{(CRInv2)}}
\newcommand{\cycwhilerule}{\mbox{(CInv)}}
\newcommand{\cycsubstrule}{\mbox{(Sub)}}
\newcommand{\substrule}{\mbox{(Sub)}}
\newcommand{\isubstrule}{\mbox{(RSub)}}
\newcommand{\indhyp}{\mbox{(IH)}}
\newcommand\figref[1]{Fig. \textcolor{black}{\ref{#1}}}
\begin{document}
\begin{frontmatter}
\title{Cyclic Proofs in Hoare Logic and its Reverse}
\author{James Brotherston\thanksref{a}
}
   \author{Quang Loc Le\thanksref{a}
   }
   \author{Gauri Desai\thanksref{a}
   }		
   \author{Yukihiro Oda\thanksref{b}
   }
   \address[a]{Department of Computer Science, University College London,	
    London, United Kingdom}  							
   \address[b]{Tohoku University, Sendai, Japan}

\maketitle

\begin{abstract}
We examine the relationships between axiomatic and cyclic proof systems for the partial and total versions of Hoare logic and those of its dual, known as reverse Hoare logic (or sometimes incorrectness logic).

In the axiomatic proof systems for these logics, the proof rules for looping constructs involve an explicit loop invariant, which in the case of the total versions additionally require a well-founded termination measure.  In the cyclic systems, these are replaced by rules that simply unroll the loops, together with a principle allowing the formation of cycles in the proof, subject to a global soundness condition that ensures the well-foundedness of the circular reasoning.  Interestingly, the cyclic soundness conditions for partial Hoare logic and its reverse are similar and essentially coinductive in character, while those for the total versions are also similar and essentially inductive.  We show that these cyclic systems are sound, by direct argument, and relatively complete, by translation from axiomatic to cyclic proofs.
\end{abstract}

\begin{keyword}
Cyclic proofs, Hoare logic, reverse Hoare logic, incorrectness logic
\end{keyword}

\end{frontmatter}

\section{Introduction}
\label{sec:intro}

Hoare logic~\cite{Hoare:69} is long established as a formal framework in which to specify and reason about correctness properties of computer programs.  Typically, these properties, written as triples $\htriple{P}{C}{Q}$, are given a \emph{partial correctness} interpretation, meaning that certain unwanted behaviours of the program $C$ cannot happen: namely those in which $C$ runs on a state satisfying the \emph{precondition} $P$ but results in a state not satisfying the \emph{postcondition} $Q$.  Sometimes one works instead with a \emph{total correctness} interpretation of judgements, in which the termination of $C$ from states satisfying $P$ is additionally guaranteed.

More recently gaining in popularity is the dual notion of \emph{reverse Hoare logic} proposed in~\cite{Edsko:SEFM:2011} for specifying and proving the presence of specified behaviours in programs, as opposed to their absence.  In the original version of reverse Hoare logic one gives a reachability or ``incorrectness'' interpretation to triples $\ihtriple{P}{C}{Q}$, meaning that certain behaviours of $C$ do happen: namely, there are computations of $C$ reaching all states satisfying $Q$ from states satisfying $P$.  O'Hearn went on to propose an extension of this logic dubbed \emph{incorrectness logic} which supports the specification of various types of program errors in the postcondition $Q$ and so can be used as the basis of automatic bug detection via proof search~\cite{OHearn:19,Le:OOPSLA:2022}. A weaker \emph{partial reachability} interpretation of reverse Hoare triples can also be given, in which validity also admits the possibility of program divergence from states satisfying $P$.

Turning to proof theory, Hoare logic was originally formulated as an axiomatic proof system with rules for each program construct and a small number of generic rules; proofs themselves are then as usual finite derivation trees of proof judgements.  Of particular note is the partial Hoare logic rule for {\tt while} loops, which requires the prover to invent a suitable \emph{loop invariant} that is preserved over any execution of the loop body (and entails the desired postcondition). Total Hoare logic possesses a similar rule, except that we must also provide a well-founded \emph{termination measure} $t$ that decreases on each iteration of the loop:
\[\begin{array}{c@{\hspace{1.5cm}}c}
\begin{prooftree}
\htriple{P \wedge B}{C}{P}
\justifies
\htriple{P}{\while{B}{C}}{P \wedge \neg B}
\using \cwhilerule
\end{prooftree}
&
\begin{prooftree}
\htriple{P \wedge B \wedge t = n}{C}{P \land t<n}
\justifies
\htriple{P}{\while{B}{C}}{P \wedge \neg B}
\using \cwhiletotalrule
\end{prooftree}
\end{array}\]
Reverse Hoare logic possesses analogous ``backwards'' versions of these proof rules in its partial and total variants, respectively (cf. Section~\ref{sec.two.hoare}). Devising suitable loop invariants and termination measures is challenging and a serious obstacle to automated proof search, and has been the subject of much work in the literature on program verification~\cite{McMillan:08}.

An alternative way of approaching these challenges is to instead adopt systems based on \emph{cyclic proof}, which were first introduced to reason about process calculi~\cite{Stirling-walker:91,Niwinski-Walukiewicz:97} and later used to develop proof systems for various logics with (co)inductively defined constructs (see e.g.~\cite{Dam-Gurov:02,Brotherston:05,Brotherston:07,Brotherston-Gorogiannis:14,Afshari-Leigh:17,Das-Pous:17,Docherty-Rowe:19,Tellez-Brotherston:19}).  Cyclic proofs are essentially non-wellfounded, regular derivation trees known as ``pre-proofs'' --- typically presented as finite trees with ``backlinks'' from open leaves to interior nodes --- subject to a \emph{global soundness condition} typically ensuring that each infinite path in the pre-proof implicitly embodies a valid argument by infinite descent~\cite{Brotherston:PhD,Brotherston-Simpson:11}.
It is perhaps less widely known than it might be that both partial and total Hoare logic can be formulated as cyclic proof systems sharing exactly the same proof rules but employing two different global soundness conditions (an observation going back at least to~\cite{Brotherston-Gorogiannis:14}).

In this paper, we present axiomatic and cyclic proof systems for the total and partial versions of Hoare logic and reverse Hoare logic, examine their commonalities and symmetries, and we show all of our systems sound and relatively complete\footnote{Note to referees: In fact we have not yet fully established the relative completeness of our axiomatic system for partial reverse Hoare logic with respect to validity, but it seems highly plausible.}. For partial correctness, we stipulate that any infinite path in the pre-proof must contain infinitely many symbolic command executions, which (with local soundness of the proof rules) ensures that any putative counterexample to $\htriple{P}{C}{Q}$, i.e. a finite execution of $C$ from $P$ not resulting in $Q$, in fact cannot be finite and therefore is not a counterexample after all.  For total correctness, we instead require that any infinite path must contain an infinite descending chain of well-founded values in the preconditions along the path.  Thus, any putative counterexample to $\htriple{P}{C}{Q}$, i.e. an execution of $C$ from $P$ that either is infinite, or finite but does not result in $Q$, in fact \emph{must} be infinite and yield an infinite descending chain of well-founded values, again a contradiction. For reverse Hoare logic, we obtain similar global soundness conditions: the condition for partial reverse Hoare logic is similar to the condition for standard partial Hoare logic, and similarly for the total versions. Soundness of the cyclic systems follows by direct arguments of the usual type for such systems (cf.~\cite{Brotherston:PhD}) and their relative completeness follows by explicit transformations from standard proofs to cyclic proofs.  The developments in our paper are shown diagrammatically in Figure~\ref{fig:map}.

The remainder of this paper is structured as follows. Section~\ref{sec:pl} describes our programming language and its semantics. Section~\ref{sec.two.hoare} presents the partial and total versions of Hoare logic and their correspondents in reverse Hoare logic, both semantically and as standard axiomatic proof systems. Section~\ref{sec:cyclic} and Section~\ref{sec:cyclic:il:hoare} present our cyclic proof systems for these logics, along with their corresponding soundness and relative completeness arguments.  Finally, Section~\ref{sec:conclusion} discusses future work and concludes.

\tikzstyle{JBnode} = [rectangle, rounded corners,
minimum width=3.2cm,
minimum height=1.4cm,
text centered,
draw=black,
fill=gray!15]

\tikzstyle{JBarrow} = [thick,rounded corners=0.5cm,->,>={stealth}]
\tikzstyle{JBdoublearrow} = [thick,rounded corners=0.5cm,<->,>=stealth]
\tikzstyle{JBdoubledashed} = [thick,dashed,<->,>=stealth]

\begin{figure}[hpt]
\centering
\begin{tikzpicture}

\node (CycTRHL) [JBnode,align=center] at (4.5,2) {cyclic provability \\ in $\TRHL$};
\node (AxTRHL)  [JBnode,align=center] at (4.5,5.5) {provability \\ in $\TRHL$};
\node (ValTRHL) [JBnode,align=center] at (4.5,9) {validity \\ in $\TRHL$};

\node (CycPRHL) [JBnode,align=center] at (11.5,2) {cyclic provability \\ in $\PRHL$};
\node (AxPRHL)  [JBnode,align=center] at (11.5,5.5) {provability \\ in $\PRHL$};
\node (ValPRHL) [JBnode,align=center] at (11.5,9) {validity \\ in $\PRHL$};

\node (CycPHL) [JBnode,align=center] at (4.5,20) {cyclic provability \\ in $\PHL$};
\node (AxPHL)  [JBnode,align=center] at (4.5,16.5) {provability \\ in $\PHL$};
\node (ValPHL) [JBnode,align=center] at (4.5,13) {validity \\ in $\PHL$};

\node (CycTHL) [JBnode,align=center] at (11.5,20) {cyclic provability \\ in $\THL$};
\node (AxTHL)  [JBnode,align=center] at (11.5,16.5) {provability \\ in $\THL$};
\node (ValTHL) [JBnode,align=center] at (11.5,13) {validity \\ in $\THL$};

\draw [JBarrow] (AxPHL) -- (CycPHL) node[midway,left] {Thm.~\ref{thm:HL_translation}};
\draw [JBarrow] (AxTHL) -- (CycTHL) node[midway,right] {Thm.~\ref{thm:HL_translation}};
\draw [JBarrow] (AxPRHL) -- (CycPRHL) node[midway,right] {Thm.~\ref{thm:RHL_translation}};
\draw [JBarrow] (AxTRHL) -- (CycTRHL) node[midway,left] {Thm.~\ref{thm:RHL_translation}};
\draw [JBdoublearrow] (ValPHL) -- (AxPHL) node[midway,left,align=left] {Prop.~\ref{prop:hl_sound} / \\ Thm.~\ref{thm:hl_complete}};
\draw [JBdoublearrow] (ValTHL) -- (AxTHL) node[midway,right,align=left] {Prop.~\ref{prop:hl_sound} / \\ Thm.~\ref{thm:hl_complete}};
\draw [JBdoublearrow] (ValTRHL) -- (AxTRHL) node[midway,left,align=left] {Prop.~\ref{prop:rhl_sound} / \\ Thm.~\ref{thm:rhl_complete}};
\draw [JBdoublearrow] (ValPRHL) -- (AxPRHL) node[midway,right,align=left] {Prop.~\ref{prop:rhl_sound} / \\ Thm.~\ref{thm:rhl_complete}};
\draw [JBarrow] (CycTHL) -- (CycPHL) node[midway,above] {Defn.~\ref{defn:cyclic_proof_HL}};
\draw [JBarrow] (AxTHL) -- (AxPHL) node[midway,above] {Remark~\ref{rem:total_partial_provable}};
\draw [JBarrow] (ValTHL) -- (ValPHL) node[midway,above] {Defn.~\ref{defn:valid_triples}};
\draw [JBarrow] (CycTRHL) -- (CycPRHL)  node[midway,above] {Defn.~\ref{defn:cyclic_proof_RHL}};
\draw [JBarrow] (AxTRHL) -- (AxPRHL) node[midway,above] {Remark~\ref{rem:reverse_total_partial_provable}};
\draw [JBarrow] (ValTRHL) -- (ValPRHL) node[midway,above] {Defn.~\ref{defn:valid_triples}};

\draw[JBarrow] (CycPHL.west) -- ++(-1.5,0) |- node[near start,left] {Thm.~\ref{thm:cyclic_HL_sound}} (ValPHL.west) ;
\draw[JBarrow] (CycTHL.east)  -- ++(1.5,0) |- node[near start, right] {Thm.~\ref{thm:cyclic_HL_sound}} (ValTHL.east)  ;
\draw[JBarrow] (CycTRHL.west) -- ++(-1.5,0) |- node[near start,left] {Thm.~\ref{thm:cyclic_RHL_soundness}} (ValTRHL.west);
\draw[JBarrow] (CycPRHL.east) -- ++(1.5,0) |- node[near start,right] {Thm.~\ref{thm:cyclic_RHL_soundness}} (ValPRHL.east);

\draw[JBdoubledashed] (ValPHL) -- (ValTRHL) node[midway,left,align=center] {dual \\ (Defn.~\ref{defn:valid_triples})};
\draw[JBdoubledashed] (ValTHL) -- (ValPRHL) node[midway,right,align=center] {dual \\ (Defn.~\ref{defn:valid_triples})};
\end{tikzpicture}
\caption{\label{fig:map} Summary of our developments. $\PHL$ (resp. $\THL$) is partial (resp. total) Hoare logic, and $\PRHL$ (resp. $\TRHL$) is partial (resp. total) reverse Hoare logic. }

\end{figure}

\section{Programming language}
\label{sec:pl}

In this section, we give the syntax and semantics of a simple {\em while} programming language (without procedures or function calls).

We assume an infinite set $\var$ of \emph{variables} ranged over by $x,y,z,\ldots$.  \emph{Expressions} are first-order terms built from variables, numerals and possibly function symbols ($+$, $\times$, etc). The syntax of \emph{Boolean conditions} $B$ and \emph{programs} $C$ is then given
by the following grammar:
\[\begin{array}{rl}
    B := & E = E \mid E \leq E \mid \neg B \mid B \wedge B   \\[1ex]
    C := & \cskip \mid \assign{x}{E} \mid C ; C \mid \ifelse{B}{C}{C} \mid
     \while{B}{C}
\end{array}\]
We write $E[E'/x]$ and $B[E'/x]$ to denote the substitution of expression $E'$ for variable $x$ in expression $E$ and Boolean condition $B$, respectively, and write $\fv(C)$ to denote the set of program variables in $C$.  We also employ standard shorthand notations such as $E \neq E$ for $\neg(E_1 = E_2)$ and $E_1 < E_2$ for $E_1 \leq E_2 \wedge \neg E_1=E_2$.

Semantically, expressions denote (program) \emph{values} in a set $\val$, which for technical convenience we take in this paper to be the natural numbers. A \emph{(program) state} is a function $\sigma: \var \rightarrow \val$.  The interpretations $\sem{E}\sigma \in \val$ of any expression $E$ and $\sem{B}\sigma \in \{\top,\bot\}$ of any Boolean condition $B$ in state $\sigma$ are then standard: $\sem{x}\sigma = \sigma(x)$, $\sem{E_1+E_2}\sigma = \sem{E_1}\sigma+\sem{E_2}\sigma$, $\sem{E_1 \leq E_2}\sigma \Leftrightarrow \sem{E_1}\sigma \leq \sem{E_2}\sigma$, and so on.  Finally, we write $\sigma[x \mapsto E]$ for the state defined exactly as $\sigma$ except that $\sigma[x \mapsto E](x) = \sem{E}\sigma$. We write $\dom(\sigma)$ to denote
the domain of $\sigma$.

We define a standard small-step operational semantics of our programs. A \emph{program configuration} is a pair $\config{C}{\sigma}$, where $C$ is a program and $\sigma$ a state.
The small-step semantics is then given by a binary relation $\exc$ on configurations, as shown in Figure~\ref{fig:prog_sem}.
 An \emph{execution} (of $C$) is a possibly infinite sequence of configurations $(\gamma_i)_{i \geq 0}$ with $\gamma_0 = \config{C}{\,\_\,}$ such that $\gamma_i \exc \gamma_{i+1}$ for all $i \geq 0$.
 We write $\exc^n$ for the $n$-step variant of $\exc$ to represent $(\gamma_i)_{0 \leq i \leq n}$ as $\gamma_0 \exc^n \gamma_n$, and $\exc^*$ for the reflexive-transitive closure of $\exc$
 to represent finite executions of arbitrary length. We write $\config{C}{\sigma}\diverges$ if $\config{C}{\sigma}$ \emph{diverges}, i.e. there is an infinite execution beginning $\config{C}{\sigma} \exc \ldots$, and $\config{C}{\sigma}\converges$ if $\config{C}{\sigma}$ \emph{converges}, meaning that there is no such execution. For any program $C$ we may harmlessly identify $C$ with $C;\cskip$ (but only finitely often).

\begin{figure}[tb]\vspace{-5mm}
\[\begin{array}{c}
\begin{prooftree}
\phantom{D}
\justifies
\config{\cskip; C}{\sigma} \exc \config{C}{\sigma}
\end{prooftree}
\qquad
\begin{prooftree}
\phantom{C}
\justifies
\config{\assign{x}{E}}{\sigma} \exc \config{\cskip}{\sigma[x \mapsto \sem{E}\sigma]}
\end{prooftree}
\qquad
\begin{prooftree}
\config{C_1}{\sigma  } \exc \config{C_1'}{\sigma'}
\justifies
\config{C_1; C_2}{\sigma} \exc \config{C_1'; C_2}{\sigma'}
\end{prooftree}
\\[1ex]\\
\begin{prooftree}
\sigma \models B
\justifies
\config{\ifelse{B}{C_1}{C_2}}{\sigma} \exc \config{C_1}{\sigma}
\end{prooftree}
\qquad
\begin{prooftree}
\sigma \models \neg B
\justifies
\config{\ifelse{B}{C_1}{C_2}}{\sigma} \exc \config{C_2}{\sigma}
\end{prooftree}
\\[1ex]\\
\begin{prooftree}
\sigma \models B
\justifies
\config{\while{B}{C}}{\sigma} \exc  \config{C; \while{B}{C}}{\sigma}
\end{prooftree}
\qquad
\begin{prooftree}
\sigma \models \neg B
\justifies
\config{\while{B}{C}}{\sigma} \exc \config{\cskip}{\sigma}
\end{prooftree}
\end{array}\]

\caption{\label{fig:prog_sem} Operational semantics of while programs.}
\end{figure}

\section{Hoare logic and reverse Hoare logic}
\label{sec.two.hoare}
In this section, we define Hoare logic and reverse Hoare logic, in both their partial and total forms, first in terms of validity and then as axiomatic proof systems.

\emph{Assertions}, ranged over by $P,Q,R,\ldots$ are standard formulas of first-order logic, whose terms at least include our expressions $E$ and whose atomic formulas at least include our Boolean conditions $B$ as given in the previous section.  We write $\sigma \models P$ to denote satisfaction of assertion $P$ by state $\sigma$, defined as usual in first-order logic, and write $\sem{P}$ as a shorthand for $\{\sigma \mid \sigma \models P\}$. Then a \emph{Hoare triple} is written as $\htriple{P}{C}{Q}$ and a \emph{reverse Hoare triple} as $\ihtriple{P}{C}{Q}$, where $C$ is a program and $P$ and $Q$ are assertions.

\begin{definition}[Validity]
\label{defn:valid_triples}
First, for a program $C$ and assertion $P$, define $\posts{C}{P}$ , the \emph{post-states} of $C$ under $P$, by
\[
   \posts{C}{P} = \{\sigma' \mid \exists\sigma.\ \sigma \models P \mbox{ and }
   \config{C}{\sigma} \exc^* \config{\cskip}{\sigma'} \}\  .
\]
We write $\posts{C}{P}\converges$ to mean that $\config{C}{\sigma}\converges$ for all $\sigma \in \sem{P}$.

Then we have notions of \emph{validity} for Hoare triples $\htriple{P}{C}{Q}$ in \emph{partial} ($\PHL$) and \emph{total} ($\THL$) Hoare logic, and for reverse Hoare triples $\ihtriple{P}{C}{Q}$ in total ($\TRHL$) and partial ($\PRHL$) reverse Hoare logic:
\begin{itemize}
    \item $\htriple{P}{C}{Q}$ is \emph{valid in $\PHL$} iff $\posts{C}{P} \subseteq \sem{Q}$;
    \item $\htriple{P}{C}{Q}$ is \emph{valid in $\THL$} iff $\posts{C}{P} \subseteq \sem{Q}$ and $\posts{C}{P}\converges$;
    \item $\ihtriple{P}{C}{Q}$ is \emph{valid in $\TRHL$} iff $\posts{C}{P} \supseteq \sem{Q}$ ;
    \item $\ihtriple{P}{C}{Q}$ is \emph{valid in $\PRHL$} iff $\posts{C}{P}\converges$ implies $\posts{C}{P} \supseteq \sem{Q}$.
\end{itemize}
\end{definition}

We note that, by definition, validity in $\THL$ immediately implies validity in $\PHL$, and validity in $\TRHL$ implies validity in $\PRHL$. By unpacking the notations in Defn.~\ref{defn:valid_triples}, validity for each of our logics can also be restated in a perhaps more convenient operational form, as follows:
\begin{description}
\item[$\PHL$:] $\htriple{P}{C}{Q}$ is valid iff $\forall \sigma,\sigma'.$ if $\sigma \models P$ and $\config{C}{\sigma} \exc^* \config{\cskip}{\sigma'}$ then $\sigma' \models Q$;
\item[$\THL$:] $\htriple{P}{C}{Q}$ is valid iff $\forall \sigma,\sigma'$ if $\sigma \models P$ then $\config{C}{\sigma}\converges$, and if $\config{C}{\sigma} \exc^* \config{\cskip}{\sigma'}$ then $\sigma' \models Q$;
\item[$\TRHL$:] $\ihtriple{P}{C}{Q}$ is valid iff $\forall \sigma'.$ if $\sigma' \models Q$ then $\exists \sigma.\ \sigma \models P$ and $\config{C}{\sigma} \exc^* \config{\cskip}{\sigma'}$;
\item[$\PRHL$:] $\ihtriple{P}{C}{Q}$ is valid iff $\forall \sigma'.$ if $\sigma' \models Q$ then $\exists \sigma.\ \sigma \models P$ and either $\config{C}{\sigma} \exc^* \config{\cskip}{\sigma'}$ or $\config{C}{\sigma}\diverges$.
\end{description}

\subsection{Provability in partial and total Hoare logic}
\label{sec:hoare_axiom}

Provability in Hoare logic is defined as derivability using the rules in Fig.~\ref{fig:hoare_rules}. The partial version $\PHL$ of Hoare logic uses the rule $\cwhilerule$, which requires the prover to find a \emph{loop invariant} that is preserved by the loop body at each iteration, while the total version $\THL$ omits this rule in favour of the rule $\cwhiletotalrule$ which additionally requires the prover to find a suitable well-founded termination measure $t$ that decreases at each iteration\footnote{Note that the termination measure may not become negative because all our values are natural numbers; but, if desired, one can add for safety the requirement that $P \models t \geq 0$.}. The remaining rules are standard~\cite{Hoare:69,Apt-Olderog:19}, except that our assignment axiom is written in the forward style from Floyd~\cite{Floyd:67}.

\begin{figure}[tbh]
\[\begin{array}{c}
\begin{prooftree}
\phantom{P}
\justifies
\htriple{P}{\cskip}{P}
\using \cskiprule
\end{prooftree}
\quad\quad
\begin{prooftree}
\phantom{P}
\justifies
\htriple{P}{x := E}{P[x'/x] \land x = E[x'/x]}
\using \cassignrule
\end{prooftree}
\quad\quad
\begin{prooftree}
\htriple{P}{C_1}{R} \phantom{w} \htriple{R}{C_2}{Q}
\justifies
\htriple{P}{C_1;C_2}{Q}
\using \csequencerule
\end{prooftree}
\\ \\
\begin{prooftree}
P' \models P \quad \htriple{P}{C}{Q} \quad Q \models Q'
\justifies
\htriple{P'}{C}{Q'}
\using \cconseqrule
\end{prooftree}
\qquad
\begin{prooftree}
\htriple{P \wedge B}{C_1}{Q} \phantom{w}
\htriple{P \wedge \neg B}{C_2}{Q}
\justifies
\htriple{P}{\ifelse{B}{C_1}{C_2}}{Q}
\using \ccondrule
\end{prooftree}
\\[1ex] \\
\begin{prooftree}
\htriple{P \wedge B}{C}{P}
\justifies
\htriple{P}{\while{B}{C}}{P \wedge \neg B}
\using \cwhilerule
\end{prooftree}
\qquad
\begin{prooftree}
\htriple{P \wedge B \wedge t = n}{C}{P \land t<n}
\justifies
\htriple{P}{\while{B}{C}}{P \wedge \neg B}
\using \cwhiletotalrule
\end{prooftree}
\end{array}\]
\caption{\label{fig:hoare_rules} Proof rules for Hoare triples.  Partial Hoare logic $PHL$ uses the rule $\cwhilerule$ and total Hoare logic $THL$ uses $\cwhiletotalrule$, with all other rules shared. Variable $x'$ is fresh in rule $\cassignrule$ and variable $t$ is fresh in $\cwhiletotalrule$.}
\end{figure}

\begin{remark}
\label{rem:total_partial_provable}
Given the premise of $\cwhiletotalrule$, namely $\htriple{P \wedge B \wedge t = n}{C}{P \land t<n}$, it is easy to derive the premise of $\cwhilerule$, namely $\htriple{P \wedge B}{C}{P}$: on the one hand we have $P \land t < n \models P$, and on the other we have $P \wedge B \models P \wedge B \wedge t=n$, because $n$ is fresh.  Thus, provability in total Hoare logic $\THL$ immediately implies provability in its partial version $\PHL$ as well.
\end{remark}

The following results are well known in the literature.

\begin{proposition}[Soundness]
\label{prop:hl_sound}
If \htriple{P}{C}{Q} is provable in $\PHL$ (resp. $\THL$) then it is valid in $\PHL$ (resp. $\THL$).
\end{proposition}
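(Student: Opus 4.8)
The plan is to prove both statements together by induction on the structure of the derivation of $\htriple{P}{C}{Q}$, showing that if every premise of the last rule applied is valid (in $\PHL$, resp.\ $\THL$) then so is the conclusion; throughout I would use the operational reformulations of validity given right after Definition~\ref{defn:valid_triples}. First I would record two routine facts about the small-step semantics: if $\config{C_1}{\sigma} \exc^* \config{\cskip}{\sigma''}$ then $\config{C_1; C_2}{\sigma} \exc^* \config{C_2}{\sigma''}$, and conversely any finite execution of $C_1; C_2$ reaching $\config{\cskip}{\sigma'}$ factors through some $\config{C_2}{\sigma''}$ with $\config{C_1}{\sigma} \exc^* \config{\cskip}{\sigma''}$ (both by induction on execution length).

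The base cases $\cskiprule$ and $\cassignrule$ are immediate from the operational semantics: from $\config{\cskip}{\sigma}$ the only reachable $\cskip$-configuration is $\config{\cskip}{\sigma}$ itself, while $\config{\assign{x}{E}}{\sigma}$ steps in one move to $\config{\cskip}{\sigma[x \mapsto \sem{E}\sigma]}$, which satisfies $P[x'/x] \wedge x = E[x'/x]$ whenever $\sigma \models P$ (taking the fresh variable $x'$ to denote the original value $\sigma(x)$); both programs converge trivially. The cases $\csequencerule$, $\ccondrule$ and $\cconseqrule$ are also routine: for $\csequencerule$ one uses the factorisation fact above to split a finite execution of $C_1; C_2$ at the hand-over point, and in the total case additionally observes that convergence of $C_1$ from $P$ together with convergence of $C_2$ from $R$ gives convergence of $C_1; C_2$ from $P$; $\ccondrule$ is a case split on $\sem{B}\sigma$; and $\cconseqrule$ follows from $\sem{P'} \subseteq \sem{P}$, $\sem{Q} \subseteq \sem{Q'}$ and the induction hypothesis.

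The substantive cases are the loop rules. For $\cwhilerule$ I would show by an inner induction on the length of $\config{\while{B}{C}}{\sigma} \exc^* \config{\cskip}{\sigma'}$ that $\sigma \models P$ implies $\sigma' \models P \wedge \neg B$: if $\sigma \models \neg B$ the loop exits in one step with $\sigma' = \sigma$; if $\sigma \models B$ the loop unrolls to $\config{C; \while{B}{C}}{\sigma}$, its body-part terminates (since the whole execution does) at some $\config{\cskip}{\sigma''}$ with $\sigma'' \models P$ by validity of the premise $\htriple{P \wedge B}{C}{P}$, and the strictly shorter remaining execution from $\config{\while{B}{C}}{\sigma''}$ is handled by the inner hypothesis. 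For $\cwhiletotalrule$ the postcondition argument is identical (using $P \wedge t < n \models P$ and, by freshness of $n$, $P \wedge B \models P \wedge B \wedge t = n$, cf.\ Remark~\ref{rem:total_partial_provable}), so the only extra obligation is convergence. I would establish this by contradiction: if $\sigma_0 \models P$ but $\config{\while{B}{C}}{\sigma_0}$ diverges, then --- since by the induction hypothesis the premise is valid in $\THL$, so the body converges from every state satisfying $P \wedge B$ --- the infinite execution must complete infinitely many iterations, producing states $\sigma_0, \sigma_1, \sigma_2, \ldots$ with each $\sigma_i \models P \wedge B$ and $\config{C}{\sigma_i} \exc^* \config{\cskip}{\sigma_{i+1}}$. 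Instantiating the fresh $n$ to the value $\sem{t}\sigma_i$, so that $\sigma_i \models P \wedge B \wedge t = n$, validity of the premise yields $\sigma_{i+1} \models P \wedge t < n$, i.e.\ $\sem{t}\sigma_{i+1} < \sem{t}\sigma_i$; hence $(\sem{t}\sigma_i)_{i \geq 0}$ is an infinite strictly decreasing sequence of natural numbers, which is impossible.

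The step I expect to be the main obstacle is this last one: one has to extract, purely from the small-step rules, an honest infinite sequence of completed iterations from the assumption that $\config{\while{B}{C}}{\sigma_0}$ diverges (each iteration being a finite block consisting of a guard-test step, a finite converging run of the body, and a step returning to the loop), and one has to be scrupulous about the freshness side condition on $n$ so that $\sigma_i \models P \wedge B$ genuinely entails $\sigma_i \models P \wedge B \wedge t = \sem{t}\sigma_i$. Everything else is standard bookkeeping.
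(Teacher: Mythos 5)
The paper does not actually prove Proposition~\ref{prop:hl_sound}: it states it as ``well known in the literature'' and moves on, so there is no in-paper argument to compare yours against. Your proposal is the standard rule-induction proof that the literature reference stands for, and it is correct: the factorisation lemma for sequencing, the inner induction on execution length for $\cwhilerule$, and the infinite-descent contradiction for $\cwhiletotalrule$ (extracting infinitely many completed iterations from a divergence, using convergence of the body, and reading off a strictly decreasing sequence $(\sem{t}\sigma_i)_i$ via freshness of $n$) are exactly the right ingredients, and you correctly identify the two delicate points. The only remark worth adding is that in the $\cassignrule$ case your phrase ``taking the fresh variable $x'$ to denote the original value $\sigma(x)$'' is doing real work: as literally written, the paper's postcondition $P[x'/x] \wedge x = E[x'/x]$ has $x'$ free, and the post-state satisfies it only under the convention that validity is insensitive to the values of variables not occurring in the triple (equivalently, that $x'$ is implicitly existentially quantified); you should state that convention explicitly rather than leave it in a parenthesis, since without it the axiom itself is not valid.
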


\begin{theorem}[Relative completeness~\cite{Cook:1978}]
\label{thm:hl_complete}
Assuming an oracle for logical entailment between assertions, if \htriple{P}{C}{Q} is valid in $\PHL$ (resp. $\THL$) then it is provable in $\PHL$ (resp. $\THL$).
\end{theorem}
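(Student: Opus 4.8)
The plan is to run Cook's classical relative completeness argument~\cite{Cook:1978}, adapted to the forward (Floyd-style) assignment rule used here by phrasing it in terms of \emph{strongest postconditions} rather than weakest preconditions. The first ingredient is \emph{expressiveness}: for every program $C$ and assertion $P$ there is an assertion $\mathrm{sp}(C,P)$, the strongest postcondition of $C$ relative to $P$, with $\sem{\mathrm{sp}(C,P)} = \posts{C}{P}$. Since we fixed $\val = \nat$, this is the standard G\"odel $\beta$-function construction: $\mathrm{sp}(C,P)$ is an arithmetical formula that internalises the defining condition of $\posts{C}{P}$ from Definition~\ref{defn:valid_triples}, saying of a state $\sigma'$ that $\config{C}{\sigma} \exc^* \config{\cskip}{\sigma'}$ for some $\sigma \models P$. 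The same machinery lets us express, for a loop $\while{B}{C'}$, the set of states reachable from $\sem{P}$ after some whole number of complete iterations, and --- when the loop converges from those states --- a term $t$ whose value at a state is the number of iterations the loop still performs from it; both of these are needed below.

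The second ingredient is the key lemma: for all $C$ and $P$, the triple $\htriple{P}{C}{\mathrm{sp}(C,P)}$ is provable in $\PHL$, and, when $\posts{C}{P}\converges$, also in $\THL$. This goes by structural induction on $C$. The cases $\cskip$ and $\assign{x}{E}$ use $\cskiprule$ and $\cassignrule$, the latter followed by $\cconseqrule$ since $\mathrm{sp}(\assign{x}{E},P)$ may be taken to be $\exists x'.\,P[x'/x] \wedge x = E[x'/x]$, the existential closure of the postcondition delivered by $\cassignrule$. The case $C_1;C_2$ applies $\csequencerule$ to the instances $\htriple{P}{C_1}{\mathrm{sp}(C_1,P)}$ and $\htriple{\mathrm{sp}(C_1,P)}{C_2}{\mathrm{sp}(C_2,\mathrm{sp}(C_1,P))}$ given by the induction hypothesis, using $\mathrm{sp}(C_1;C_2,P) \equiv \mathrm{sp}(C_2,\mathrm{sp}(C_1,P))$. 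The case $\ifelse{B}{C_1}{C_2}$ uses $\cconseqrule$ to weaken each branch's postcondition to $\mathrm{sp}(C_1,P\wedge B)\vee\mathrm{sp}(C_2,P\wedge\neg B)$ and then applies $\ccondrule$, using $\posts{\ifelse{B}{C_1}{C_2}}{P} = \posts{C_1}{P\wedge B}\cup\posts{C_2}{P\wedge\neg B}$.

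The substantial case is $C = \while{B}{C'}$. Take $I$ to be the assertion with $\sem{I} = \{\sigma' \mid \exists\sigma\models P.\ \config{\while{B}{C'}}{\sigma} \exc^* \config{\while{B}{C'}}{\sigma'}\}$, which exists by expressiveness. Then $P \models I$; the induction hypothesis gives $\htriple{I\wedge B}{C'}{\mathrm{sp}(C',I\wedge B)}$, and $\posts{C'}{I\wedge B}\subseteq\sem{I}$, so $\cconseqrule$ yields $\htriple{I\wedge B}{C'}{I}$; then $\cwhilerule$ gives $\htriple{I}{\while{B}{C'}}{I\wedge\neg B}$; and a final $\cconseqrule$, using $P\models I$ and $I\wedge\neg B\models\mathrm{sp}(\while{B}{C'},P)$, closes the $\PHL$ case. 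For $\THL$, the hypothesis $\posts{\while{B}{C'}}{P}\converges$ forces $C'$ to converge on every state in $\sem{I\wedge B}$ (otherwise the loop would diverge from a reachable state, hence from a state satisfying $P$), so the $\THL$ induction hypothesis applies to $C'$; with the measure $t$ from expressiveness the same reasoning yields $\htriple{I\wedge B\wedge t = n}{C'}{I\wedge t<n}$, since any post-state of $C'$ from $I\wedge B\wedge t=n$ is reachable and the loop performs exactly one fewer iteration from it, and then $\cwhiletotalrule$ followed by $\cconseqrule$ closes the $\THL$ case.

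Finally, if $\htriple{P}{C}{Q}$ is valid in $\PHL$ then $\posts{C}{P}\subseteq\sem{Q}$, i.e. $\mathrm{sp}(C,P)\models Q$; the oracle discharges this entailment, so applying $\cconseqrule$ to the proof of $\htriple{P}{C}{\mathrm{sp}(C,P)}$ from the key lemma gives $\htriple{P}{C}{Q}$ in $\PHL$. If $\htriple{P}{C}{Q}$ is valid in $\THL$ then additionally $\posts{C}{P}\converges$, which is precisely the side condition under which the key lemma produces a $\THL$ proof of $\htriple{P}{C}{\mathrm{sp}(C,P)}$, and the same final $\cconseqrule$ step yields a $\THL$ proof of $\htriple{P}{C}{Q}$. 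The main obstacle is the expressiveness step together with the $\while$ case: coding strongest postconditions --- and, for $\THL$, the ``iterations remaining'' term --- as genuine assertions is what pins the argument to an arithmetical setting (and is the reason $\val$ was fixed to $\nat$), while choosing the loop invariant and measure so that the looping rules apply is where the real content lies; the other cases are routine manipulation of the generic rules, and in practice I would cite the classical treatment~\cite{Cook:1978,Apt-Olderog:19} for the expressiveness lemma itself.
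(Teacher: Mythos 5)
The paper does not prove this theorem itself---it simply cites the classical result of Cook~\cite{Cook:1978}---and your reconstruction is exactly that standard argument: expressiveness of strongest postconditions over $\nat$, the key lemma $\htriple{P}{C}{\mathrm{sp}(C,P)}$ by structural induction with the reachable-state invariant and iteration-count measure in the \code{while} case, and a final application of \cconseqrule{} discharged by the oracle. The adaptation to the forward assignment rule (existentially closing $x'$) and the handling of the $\THL$ case are both correct, so this matches the approach the paper implicitly relies on.
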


\begin{example}
We use the program $\while{x>0} {x := x - 2 ;}$ as a running example. This program has invariant $x\geq 0 \land x \% 2 = 0$ where the modulo operator $\%$ is used to specify even numbers. Here, we show a $\THL$ proof of the total correctness triple
$\htriple{x\geq 0 \land x \% 2 = 0}{\while{x>0} {x := x - 2 ;}}{x = 0}$.

\begin{footnotesize}
\[\begin{prooftree}
\[
\[\[
\justifies
\htriple{x\geq 0 \land x \% 2 = 0 \land x<2n}{\cskip}{x\geq 0 \land x \% 2 = 0 \land x<2n} \tikz \node (bud2) {}; \using  \cskiprule
\]
\justifies
\htriple{x'\geq 0 \land x' \% 2 = 0 \land x'>0  \land x'=2n \land x=x'-2}{\cskip}{x\geq 0 \land x \% 2 = 0 \land x<2n} \using \cconseqrule
\]
\justifies
\htriple{x\geq 0 \land x \% 2 = 0 \land x>0 \land x=2n}{x := x - 2;}{x\geq 0 \land x \% 2 = 0 \land x<2n}  \using \cassignrule
\]
\justifies
\htriple{x \geq 0 \land x \% 2 = 0}{\while{x>0} {x := x - 2 ;}}{x = 0} \tikz \node (comp2) {};
\using \cwhiletotalrule
\end{prooftree}
\]
\end{footnotesize}
In this proof, we use the termination measure $t = x/2$.
\end{example}

\subsection{Provability in partial and total reverse Hoare logic}
\label{sec:il:hoare}

We give our proof rules for reverse Hoare triples $\ihtriple{P}{C}{Q}$ in Figure~\ref{fig:ihoare_rules}. Here, $\PRHL$ uses rules $\iwhileskiprule$ and $\iwhilerule$, while $\TRHL$ omits these in favour of $\iwhiletotalrule$. Note that the conditional rules $\icondtrule$ and $\icondfrule$, like those in \cite{Le:OOPSLA:2022}, have stronger preconditions than the ones in \cite{Edsko:SEFM:2011,OHearn:19}.

\begin{figure}[tbh]
\[\begin{array}{c}
\begin{prooftree}
\justifies
\ihtriple{P}{\cskip}{P}
\using \iskiprule
\end{prooftree}
\qquad
\begin{prooftree}
\justifies
\ihtriple{P}{x := E}{P[x'/x] \land x = E[x'/x]}
\using \iassignrule
\end{prooftree}
\\ [1ex]\\
\begin{prooftree}
\ihtriple{P}{C_1}{R} \phantom{w}
\ihtriple{R}{C_2}{Q}
\justifies
\ihtriple{P}{C_1;C_2}{Q}
\using \isequencerule
\end{prooftree}
\qquad
\begin{prooftree}
P \models P' \quad \ihtriple{P}{C}{Q} \quad
Q' \models Q
\justifies
\ihtriple{P'}{C}{Q'}
\using \iconseqrule
\end{prooftree}
\\[1ex] \\
\begin{prooftree}
\ihtriple{P}{C}{Q}
\justifies
\ihtriple{P[t/z]}{C}{Q[t/z]}
\using  \isubstrule
\end{prooftree}
\qquad
\begin{prooftree}
  \ihtriple{P_1}{C}{Q_1} \quad
\ihtriple{P_2}{C}{Q_2}
\justifies
\ihtriple{P_1 \lor P_2}{C}{Q_1 \lor Q_2}
\using \idisjtrule
\end{prooftree}
\\[1ex] \\
\begin{prooftree}
\ihtriple{P \wedge B}{C_1}{Q}
\justifies
\ihtriple{P \wedge B}{\ifelse{B}{C_1}{C_2}}{Q}
\using \icondtrule
\end{prooftree}
 \qquad
\begin{prooftree}
\ihtriple{P \wedge \neg B}{C_2}{Q}
\justifies
\ihtriple{P \wedge \neg B}{\ifelse{B}{C_1}{C_2}}{Q}
\using \icondfrule
\end{prooftree}
\\[1ex] \\
\begin{prooftree}
\phantom{T}
\justifies
\ihtriple{P \wedge \neg B}{\while{B}{C}}{P \wedge \neg B}
\using \iwhileskiprule
\end{prooftree}
\qquad
\begin{prooftree}
\ihtriple{P \wedge B}{C}{P}
\justifies
\ihtriple{P}{\while{B}{C}}{P \wedge \neg B}
\using \iwhilerule
\end{prooftree}
\\[1ex]\\
\begin{prooftree}
\ihtriple{P \wedge B \wedge t<n}{C}{P \wedge t=n}
\justifies
\ihtriple{P}{\while{B}{C}}{P \wedge \neg B}
\using (n>0) \iwhiletotalrule
\end{prooftree}
\end{array}\]
\caption{\label{fig:ihoare_rules} Proof rules for reverse Hoare triples.  Partial reverse Hoare logic $PRHL$ uses the rules $\iwhilerule$, while total reverse Hoare logic $TRHL$ uses $\iwhiletotalrule$, with all other rules shared. Variable $x'$ is fresh in rule $\iassignrule$ and variable $n$ is fresh in $\iwhiletotalrule$. }
\end{figure}

\begin{remark}
\label{rem:reverse_total_partial_provable}
Similarly to the observation in Remark~\ref{rem:total_partial_provable}, provability in $\TRHL$ immediately implies provability in $\PRHL$ as well.
\end{remark}

The proofs of soundness and relative completeness in the total version of reverse Hoare logic appear in \cite{Edsko:SEFM:2011} and also can be seen as the special \blue{\em ok} case of incorrectness logic in~\cite{OHearn:19}.

\begin{proposition}[Soundness]
\label{prop:rhl_sound}
If $\ihtriple{P}{C}{Q}$ is provable in $\PRHL$ (resp. $\TRHL$) then it is valid in $\PRHL$ (resp. $\TRHL$).
\end{proposition}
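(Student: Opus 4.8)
The plan is a routine induction on the structure of the derivation, showing that each proof rule of Figure~\ref{fig:ihoare_rules} preserves validity in the appropriate logic; since validity in $\TRHL$ implies validity in $\PRHL$, I would run the two arguments together, case-splitting on the last rule used and, where possible, establishing the stronger $\TRHL$ conclusion outright. (For $\TRHL$ one may also simply appeal to the soundness arguments of~\cite{Edsko:SEFM:2011,OHearn:19}.) It helps to first record a few elementary facts so that each case is short: (i) $\posts{\cskip}{P} = \sem{P}$, and $\posts{x := E}{P}$ is the evident image of $\sem{P}$ under the assignment, so that $\iskiprule$ and $\iassignrule$ are immediate (the latter by the standard soundness argument for the forward/Floyd assignment axiom); (ii) $\posts{-}{-}$ is monotone in its assertion argument, and post-states compose, i.e.\ $\sigma'' \in \posts{C_1;C_2}{P}$ iff $\config{C_2}{\sigma'} \exc^* \config{\cskip}{\sigma''}$ for some $\sigma' \in \posts{C_1}{P}$; (iii) for $\sigma \models B$ the configuration $\config{\ifelse{B}{C_1}{C_2}}{\sigma}$ takes a single, forced step to $\config{C_1}{\sigma}$, and dually for $\neg B$, so $\posts{\ifelse{B}{C_1}{C_2}}{P \wedge B} = \posts{C_1}{P \wedge B}$ (and likewise for $\neg B$); (iv) for $\sigma \models \neg B$ the configuration $\config{\while{B}{C}}{\sigma}$ takes a single step to $\config{\cskip}{\sigma}$, hence converges; and (v) a semantic substitution lemma justifying $\isubstrule$ under its side condition.

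Given these, the shared rules $\iconseqrule$, $\idisjtrule$, $\icondtrule$, $\icondfrule$ and $\isubstrule$ follow by direct manipulation of the inclusion $\sem{Q} \subseteq \posts{C}{P}$ for $\TRHL$, together with a parallel check that convergence of the post-states of the conclusion forces convergence of those of the premises for $\PRHL$ — for $\iconseqrule$, say, $P \models P'$ gives $\sem{P} \subseteq \sem{P'}$, whence $\posts{C}{P} \subseteq \posts{C}{P'}$ and also $\posts{C}{P'}\converges \Rightarrow \posts{C}{P}\converges$, after which the induction hypothesis closes the case; $\idisjtrule$ uses $\posts{C}{P_1 \lor P_2} = \posts{C}{P_1} \cup \posts{C}{P_2}$, and the conditional rules use fact (iii). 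Note that the loop rules $\iwhileskiprule$, $\iwhilerule$ and $\iwhiletotalrule$ are in fact the easiest cases of all: each has a conclusion of the form $\ihtriple{R}{\while{B}{C}}{P \wedge \neg B}$ in which every state satisfying the postcondition $P \wedge \neg B$ also satisfies the precondition $R$ and falsifies $B$, so by fact (iv) such a state already lies in $\posts{\while{B}{C}}{R}$ with $\config{\while{B}{C}}{\sigma'}\converges$; hence the conclusion is valid in $\TRHL$ — and \emph{a fortiori} in $\PRHL$ — independently of the premise.

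The one case demanding real care is $\isequencerule$ for $\PRHL$: assuming $\posts{C_1;C_2}{P}\converges$, first note that $\config{C_1}{\sigma}\converges$ for all $\sigma \models P$ (else $\config{C_1;C_2}{\sigma}\diverges$), so the induction hypothesis gives $\sem{R} \subseteq \posts{C_1}{P}$; next, every $\sigma'' \models R$ is reachable via $C_1$ from some $\sigma \models P$, so $\config{C_2}{\sigma''}\converges$, giving $\posts{C_2}{R}\converges$ and hence, by the induction hypothesis, $\sem{Q} \subseteq \posts{C_2}{R}$; composing via fact (ii) yields $\sem{Q} \subseteq \posts{C_1;C_2}{P}$, as required. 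I do not expect a deep obstacle here, since the argument is of a thoroughly standard kind; the places that reward attention are precisely this convergence-tracking in the $\PRHL$ sequencing case and the exact side condition needed to make the substitution lemma (v), and hence rule $\isubstrule$, sound — a careless reading of $\isubstrule$ would be unsound.
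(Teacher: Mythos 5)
Your proposal is correct, and for the most part it follows the same skeleton as the paper's proof (rule induction, with the $\isequencerule$ case for $\PRHL$ handled by exactly the convergence-tracking you describe, just organised around the definition $\posts{C}{P}\converges \Rightarrow \posts{C}{P} \supseteq \sem{Q}$ rather than the paper's disjunctive operational restatement). Where you genuinely depart from the paper is in the three loop rules, and your route there is both correct and strikingly simpler. The paper treats $\iwhilerule$ by unfolding one iteration using the premise, and treats $\iwhiletotalrule$ by a complete induction on the value of the termination measure $t$, chaining executions $\config{\while{B}{C}}{\sigma} \exc^* \config{\while{B}{C}}{\sigma_k}$ down to $t$-value $k$. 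You instead observe that every conclusion of these rules has the shape $\ihtriple{R}{\while{B}{C}}{P \wedge \neg B}$ with $\sem{P \wedge \neg B} \subseteq \sem{R}$, so each post-state $\sigma'$ is its own witness via the zero-iteration step $\config{\while{B}{C}}{\sigma'} \exc \config{\cskip}{\sigma'}$; hence the conclusions are valid in $\TRHL$ (a fortiori $\PRHL$) independently of the premises. This checks out against Definition~\ref{defn:valid_triples}, and it surfaces a fact the paper's proof obscures: as literally stated, the premises of $\iwhilerule$ and $\iwhiletotalrule$ play no role in their soundness (indeed the conclusion of $\iwhilerule$ is already derivable from $\iwhileskiprule$ and $\iconseqrule$). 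What the paper's longer arguments buy is robustness — they are the arguments one would need if the postcondition did not entail the precondition, and they prefigure the trace-based inductions used for the cyclic systems in Section~\ref{sec:cyclic:il:hoare} — but as a proof of this proposition your shortcut is sound. Your caution about the unstated freshness side condition on $\isubstrule$ is also well placed.
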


\begin{proof}
By rule induction on the inference rules in \figref{fig:ihoare_rules}.  Generally speaking, for the shared rules, their soundness of $\TRHL$ is a special case of the more general soundness for $\PRHL$. Here we just show the more interesting cases.

\proofcase{\isequencerule}
We first consider the case of $\PRHL$. Then, supposing $\state'\models Q$, we require to find $\sigma$ with $\sigma \models P$ such that either $\config{C_1;C_2}{\sigma} \exc^* \config{\cskip}{\sigma'}$, or $\config{C_1;C_2}{\sigma}\diverges$. By validity of the right premise, we obtain $\sigma''$ with $\sigma''\models R$ and for which there are two possibilities:
\begin{itemize}
\item We have $\config{C_2}{\sigma''} \exc^* \config{\cskip}{\sigma'}$.  By validity of the left premise there are two further subcases.  In the first we have $\sigma$ with $\sigma\models P$ and $\config{C_1}{\sigma}\exc^*\config{\cskip}{\sigma''}$.  Then by the operational semantics we obtain $\config{C_1;C_2}{\sigma} \exc^* \config{\cskip;C_2}{\sigma''} \exc^* \config{\cskip}{\sigma'}$ and are done.  In the second we have $\sigma$ with $\sigma \models P$ and $\config{C_1}{\sigma}\diverges$, in which case $\config{C_1;C_2}{\sigma}\diverges$ as well.

\item We have $\config{C_2}{\sigma''}\diverges$.  Then by the left premise we have $\sigma$ with $\sigma \models P$ and either $\config{C_1}{\sigma}\exc^*\config{\cskip}{\sigma''}$ or $\config{C_1}{\sigma}\diverges$.  Either way, we have $\config{C_1;C_2}{\sigma}\diverges$ and so are done.
\end{itemize}
For $\TRHL$, the argument above is simplified: we simply ignore all the subcases involving divergence.

\proofcase{\icondtrule,\icondfrule}
These rules are symmetric; we just consider the first.
In $\PRHL$, suppose $\state' \models Q$. By validity of the rule premise, we obtain $\sigma$ with $\sigma \models P \wedge B$ obeying one of two possibilities.  First, if $\config{C_1}{\sigma}\exc^*\config{\cskip}{\sigma'}$ then $\config{\ifelse{B}{C_1}{C_2}}{\sigma}\exc^*\config{\cskip}{\sigma'}$ as well, because $\sigma \models B$. Otherwise, if $\config{C_1}{\sigma}\diverges$ then $\config{\ifelse{B}{C_1}{C_2}}{\sigma}\diverges$ too. (For $\TRHL$, we simply ignore the second case.)

\proofcase{\iwhileskiprule} Supposing $\state' \models P \wedge \neg B$, we immediately have $\config{\while{B}{C}}{\sigma'} \exc^* \config{\cskip}{\sigma'}$.

\proofcase{\iwhilerule} This rule is sound for $\PRHL$ only.  Suppose $\sigma' \models P \wedge \neg B$. Since $\sigma' \models P$, we obtain by validity of the rule premise $\sigma \models P \wedge B$ and two possible subcases.  First, if $\config{C}{\sigma}\diverges$ then, because $\sigma \models B$, we get  $\config{\while{B}{C}}{\sigma} \exc \config{C;\while{B}{C}}{\sigma}\diverges$ and are done. Otherwise, $\config{C}{\sigma} \exc^* \config{\cskip}{\sigma'}$. Using the fact that $\sigma' \not\models B$, the operational semantics then gives us
\[\config{\while{B}{C}}{\sigma} \exc \config{C; \while{B}{C}}{\sigma}
\exc^* \config{\while{B}{C}}{\sigma'}
\exc^* \config{\cskip}{\sigma'}\ .\]
Therefore,  $\ihtriple{P \wedge B}{\while{B}{C}}{P \wedge \neg B}$ is valid as required.

\proofcase{\iwhiletotalrule}
This rule is used in $\TRHL$ only.  We first prove the following general statement: for all states $\sigma_k$, if $\sem{t}\sigma_k = k$ and $\sigma_k \models P$ then $\exists \sigma.\ \sigma \models P$ and $\config{\while{B}{C}}{\sigma} \exc^* \config{\while{B}{C}}{\sigma_k}$. We proceed by complete induction on $k$, inductively assuming the statement for all $k' < k$ and showing it then holds for $k$.  Since $\sigma_k \models P \wedge t = k$ by assumption, we have by validity of the rule premise $\sigma_{k'}$ with $\sigma_{k'} \models P \wedge B \wedge t < k$ and $\config{C}{\sigma_{k'}} \exc^* \sigma_k$. Writing $\sem{t}{\sigma_{k'}}=k'$, we then have $k' < k$, and thus by the induction hypothesis we have $\sigma$ with $\sigma \models P$ and $\config{\while{B}{C}}{\sigma} \exc^* \config{\while{B}{C}}{\sigma_{k'}}$. Thus, because $\sigma_{k'}\models B$, we get $\config{\while{B}{C}}{\sigma} \exc^* \config{\while{B}{C}}{\sigma_k}$ by the operational semantics. This completes the induction.

Now, for the main proof, assume $\sigma' \models P \wedge \neg B$, and write $\sem{t}\sigma' = k$ say.  By the inductive statement above, we have $\sigma$ with $\sigma \models P$ and $\config{\while{B}{C}}{\sigma} \exc^* \config{\while{B}{C}}{\sigma'}$.  Because $\sigma' \not\models B$, we can then extend this execution to $\config{\while{B}{C}}{\sigma} \exc^* \config{\cskip}{\sigma'}$ and so are done.
\end{proof}

\begin{theorem}[Relative completeness]
\label{thm:rhl_complete}
Assuming an oracle for logical entailment between assertions, if \ihtriple{P}{C}{Q} is valid in $\TRHL$ then it is provable.

The equivalent result for $\PRHL$ is presently only conjectured.
\end{theorem}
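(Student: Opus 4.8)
The plan is to follow the style of Cook's relative completeness proof for Hoare logic (\cref{thm:hl_complete}), but working with \emph{strongest postconditions} and transported to the reverse setting; this is in essence the argument for the \emph{ok} fragment of incorrectness logic in~\cite{OHearn:19}. The key observation is that $\posts{C}{P}$ is the \emph{largest} valid postcondition of $C$ under $P$, since $\ihtriple{P}{C}{Q}$ is valid in $\TRHL$ exactly when $\sem{Q} \subseteq \posts{C}{P}$. As $\val = \nat$, finite computations of our programs are G\"odel-codable, so --- this is the familiar expressiveness step --- for every $C$ and $P$ we may fix an assertion $\lceil\posts{C}{P}\rceil$ with $\sem{\lceil\posts{C}{P}\rceil} = \posts{C}{P}$. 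It then suffices to establish that $\ihtriple{P}{C}{\lceil\posts{C}{P}\rceil}$ is provable in $\TRHL$ for all $C$ and $P$: given any valid $\ihtriple{P}{C}{Q}$, validity yields $Q \models \lceil\posts{C}{P}\rceil$, and a single application of $\iconseqrule$ (using also $P \models P$) produces $\ihtriple{P}{C}{Q}$.

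I would prove that $\ihtriple{P}{C}{\lceil\posts{C}{P}\rceil}$ is provable by induction on the structure of $C$, for all $P$. The cases $\cskip$ and $\assign{x}{E}$ come from $\iskiprule$ and $\iassignrule$ with $\iconseqrule$; for assignment one also needs $\isubstrule$, to replace the fresh variable $x'$ of $\iassignrule$ by an appropriate term, since reverse consequence can only \emph{strengthen} postconditions, so (unlike in forward Hoare logic) the existential over $x'$ implicit in the strongest postcondition has to be discharged by substitution rather than by weakening. For $C_1;C_2$, put $R = \lceil\posts{C_1}{P}\rceil$; the induction hypothesis gives a proof of $\ihtriple{P}{C_1}{R}$, and since $\posts{C_1;C_2}{P} = \posts{C_2}{R}$ it also gives a proof of $\ihtriple{R}{C_2}{\lceil\posts{C_1;C_2}{P}\rceil}$, so $\isequencerule$ completes the case. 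For $\ifelse{B}{C_1}{C_2}$, use that $\posts{\ifelse{B}{C_1}{C_2}}{P} = \posts{C_1}{P \wedge B} \cup \posts{C_2}{P \wedge \neg B}$: the hypothesis supplies proofs of $\ihtriple{P \wedge B}{C_1}{\lceil\posts{C_1}{P \wedge B}\rceil}$ and $\ihtriple{P \wedge \neg B}{C_2}{\lceil\posts{C_2}{P \wedge \neg B}\rceil}$, which $\icondtrule$ and $\icondfrule$ lift to the conditional; $\idisjtrule$ then combines them and $\iconseqrule$ rewrites the precondition $(P \wedge B)\vee(P \wedge \neg B)$ to $P$ and the postcondition to $\lceil\posts{\ifelse{B}{C_1}{C_2}}{P}\rceil$.

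The real obstacle is the loop case $\while{B}{C}$, which is where the termination rule $\iwhiletotalrule$ and the expressiveness assumption genuinely bite. Here I would build a ``subvariant'': again via coding, an assertion $\bar{P}$ with a distinguished fresh variable $w$ whose extension is the set of states $\state$ that are reachable at a loop head from some $P$-state using exactly $\state(w)$ traversals of the loop body, i.e.\ those $\state$ with $\state_0 \models P$ and $\config{\while{B}{C}}{\state_0} \exc^* \config{\while{B}{C}}{\state}$ via exactly $\state(w)$ iterations for some $\state_0$. Then $\bar{P}[0/w]$ is equivalent to $P$, the assertion $\exists w.\,\bar{P}$ defines the set of all loop-head-reachable states, and $(\exists w.\,\bar{P}) \wedge \neg B$ defines precisely $\posts{\while{B}{C}}{P}$. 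I would instantiate $\iwhiletotalrule$ with invariant $\exists w.\,\bar{P}$ and with $t$ the arithmetically definable term computing the least index witnessing $\bar{P}$; the rule's premise $\ihtriple{(\exists w.\,\bar{P}) \wedge B \wedge t < n}{C}{(\exists w.\,\bar{P}) \wedge t = n}$ is valid --- any state reachable after exactly $n>0$ iterations arises by one run of $C$ from a $B$-satisfying state reachable in fewer iterations --- hence provable by the induction hypothesis, and $\iwhiletotalrule$ then yields a proof of a $\while{B}{C}$-triple whose postcondition is equivalent to $\lceil\posts{\while{B}{C}}{P}\rceil$. The point I expect to be delicate, and which I would want to check carefully against the exact statement of $\iwhiletotalrule$ and its freshness conditions, is reconciling the precondition of that derived triple with the required precondition $P$: the conclusion of $\iwhiletotalrule$ carries the whole invariant as its precondition, so passing down to the ``$w=0$'' slice --- which is equivalent to $P$ --- seems to require an extra $\isubstrule$ step substituting $0$ for the index variable, followed by a final $\iconseqrule$. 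Getting this bookkeeping exactly right is where I would expect to spend most of the effort. We make no attempt here at the $\PRHL$ case, which the statement only conjectures.
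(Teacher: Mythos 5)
The paper does not actually prove this theorem: it states it and defers to de Vries--Koutavas~\cite{Edsko:SEFM:2011} and the \emph{ok} fragment of incorrectness logic~\cite{OHearn:19}, so there is no in-paper argument to compare against. Your overall strategy --- observe that $\posts{C}{P}$ is the largest valid postcondition, assume it expressible, prove $\ihtriple{P}{C}{\lceil\posts{C}{P}\rceil}$ by structural induction, and finish with one application of $\iconseqrule$ --- is exactly the route taken in those sources, and your skip, sequencing and conditional cases go through with the rules of Figure~\ref{fig:ihoare_rules}.

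The loop case, however, has a genuine gap that your closing ``bookkeeping'' remark does not repair. The conclusion of $\iwhiletotalrule$ is $\ihtriple{I}{\while{B}{C}}{I\wedge\neg B}$ where $I$ is the \emph{whole} invariant, and in reverse Hoare logic $\iconseqrule$ can only \emph{weaken} (enlarge) preconditions: from $\ihtriple{I}{C}{Q}$ and $I\models P'$ one gets $\ihtriple{P'}{C}{Q}$. Since $P\models\exists w.\,\bar P$ and not conversely, there is no way to descend from the invariant to the required precondition $P$ by consequence. Your proposed fix via $\isubstrule$ also fails: if the invariant is $\exists w.\,\bar P$ there is no free index variable left to substitute for, and if you instead keep $w$ free then substituting $0$ for it rewrites the \emph{postcondition} as well, collapsing it to $\bar P[0/w]\wedge\neg B$, i.e.\ only the states that exit the loop after zero iterations --- not $\posts{\while{B}{C}}{P}$. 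What the completeness argument actually needs is the indexed-family form of the rule (as in~\cite{Edsko:SEFM:2011}, and as implicitly used in the paper's own worked example with the ghost variable $x_0$), whose conclusion has the \emph{zero-iteration slice} of the invariant as its precondition and the existential over all slices as its postcondition; deriving that from $\iwhiletotalrule$ as literally stated is the real content of the loop case and is missing here. A secondary caveat of the same flavour arises in your assignment case: discharging the implicit existential over $x'$ by $\isubstrule$ requires the witnessing old value to be denoted by a \emph{term} of the assertion language, which is an expressiveness assumption beyond mere definability of $\posts{C}{P}$.
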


\begin{example}
We show a reverse Hoare logic proof for the total incorrectness triple
$$\ihtriple{x = x_0}{\code{\while{ x>0}{x := x - 2;}}}{\exists k.\ x = x_0-2k \wedge \neg x>0}$$
where $x_0$ is a logical variable expressing the initial value of $x$ before the loop,
$k$ is a natural number.
This loop has variants $x = x_0-2k$
and termination measure $t \equiv (x_0-x)/2$.
Let \code{C} be \code{\while{ x>0}{x := x - 2;}}. The proof is as follows.
\begin{footnotesize}
\[\begin{prooftree}
\[\[
\justifies
\ihtriple{x= x_0 \land \neg x>0}{\code{C}}{x = x_0 \land \neg x>0} 
\]
\justifies
\ihtriple{x= x_0 \land \neg x>0}{\code{C}}{\exists k.\ x = x_0 -2k \land \neg x>0} 
\]
\[
\[\[\[
\justifies
\ihtriple{ x= x_0 - 2n}{\cskip}{x=x_0-2n} \tikz \node (bud2) {}; 
\]
\justifies
\ihtriple{x'=x_0-2(n-1)  \wedge x'>0 \land x=x'-2}{\cskip}{x = x_0-2n} 
\]
\justifies
\ihtriple{x=x_0-2(n-1) \wedge x>0}{x := x - 2}{x = x_0 - 2n}  
\]
\justifies
\ihtriple{x{=} x_0 \land x{>}0}{\code{C}}{\exists k.\ x {=} x_0-2k \land \neg x>0} \tikz \node (comp2) {};
\]
\justifies
\ihtriple{ x= x_0}{\code{C}}{\exists k.\ x = x_0-2k \land \neg x>0} \tikz \node (comp2) {};
\end{prooftree}
\]
\end{footnotesize}

\end{example}

\section{Cyclic proofs in Hoare logic}
\label{sec:cyclic}
This section presents a system of cyclic proofs for $\PHL$ and $\THL$, i.e. partial and total Hoare logic, together with their soundness and their subsumption of the corresponding standard proof systems from the previous section.

First, we give the rules of our cyclic proof system for Hoare triples in Figure~\ref{fig:cyc:correctness}, these being shared for both $\PHL$ and $\THL$.  Compared to their standard equivalents, there are three main points of difference.  First, we formulate the rules in ``continuation style'', where the rule(s) for a program construct $C$ is presented as a rule whose conclusion has the general form $\htriple{P}{C ; C'}{Q}$.  Second, the partial and total rules for {\tt while} loops are replaced with a single rule $\cycwhilerule$ that simply unfolds the loop once (on the left).  Third, we include a rule $\cycsubstrule$ for explicit substitution of variables by expressions; this is sometimes included anyway in Hoare logic (see e.g.~\cite{Apt-Olderog:19}) but is well known to be necessary or at least useful in general for forming backlinks between judgements that are required to be syntactically identical~\cite{Brotherston:PhD,Brotherston-Gorogiannis-Petersen:12}.

\begin{figure*}
\[\begin{array}{c}
\begin{prooftree}
\justifies
\htriple{P}{\cskip}{P}
\using \cyclskiponerule
\end{prooftree}
\qquad
\begin{prooftree}
\htriple{P}{C}{Q}
\justifies
\htriple{P}{\cskip; C}{Q}
\using  \cyclskiprule
\end{prooftree}
\qquad
\begin{prooftree}
\htriple{P[x'/x] \wedge x=E[x'/x]}{C}{Q}
\justifies
\htriple{P}{x:=E;C}{Q}
\using \cycassignrule
\end{prooftree}
\\[1ex]\\
\begin{prooftree}
P' \models P \quad \htriple{P}{C}{Q} \quad Q \models Q'
\justifies
\htriple{P'}{C}{Q'}
\using \cycconseqrule
\end{prooftree}
\qquad
\begin{prooftree}
\htriple{P \wedge B}{C_1;C'}{Q}
\phantom{ww}
\htriple{P \wedge \neg B}{C_2;C'}{Q}
\justifies
\htriple{P}{\ifelse{B}{C_1}{C_2}; C'}{Q}
\using \cyccondrule
\end{prooftree}
\\[1ex] \\
\begin{prooftree}
\htriple{P}{C}{Q}
\justifies
\htriple{P[t/z]}{C}{Q[t/z]}
\using  \substrule
\end{prooftree}
\qquad
\begin{prooftree}
\htriple{P\wedge \neg B}{C'}{Q}
\phantom{ww}
\htriple{P\wedge B}{C;\while{B}{C};C'}{Q}
\justifies
\htriple{P}{\while{B}{C};C'}{Q}
\using \cycwhilerule
\end{prooftree}
\end{array}\]
\caption{\label{fig:cyc:correctness} Cyclic proof rules for Hoare triples.
$x'$ is fresh in $\cycassignrule$, and $z$ and $t$ are not in $\fv(C)$ in $\substrule$.}
\end{figure*}

Next, we define cyclic \emph{pre-proofs} --- derivation trees with \emph{backlinks} between judgement occurrences --- and the \emph{global soundness conditions} qualifying such structures as either partial or total cyclic proofs.

\begin{definition}[Pre-proof]
\label{defn:pre_proof}
A \emph{(cyclic) pre-proof} is a pair $\mathcal{P} = (\mathcal{D,L})$, where $\mathcal{D}$ is a finite derivation tree constructed according to the proof rules and $\mathcal{L}$ is a \emph{back-link function} assigning to every leaf of $\mathcal{D}$ to which no proof rule has been applied (called a {\em bud})
another node of $\mathcal{D}$ (called its \emph{companion}) labelled by an identical proof judgement.
A \emph{leaf} of $\mathcal{D}$ is called \emph{open} if it is not applied with any proof rule.
\end{definition}

We observe that a pre-proof $\mathcal{P}$ can be viewed as a representation of a regular, infinite derivation tree~\cite{Brotherston:PhD}.

The global soundness condition qualifying pre-proofs $\mathcal{P}$ as genuine cyclic proofs is very simple in the case of partial Hoare logic: We simply require that a symbolic execution rule (i.e. not $\cycconseqrule$ or $\cycsubstrule$) is applied infinitely often along every infinite path in $\mathcal{P}$.  Essentially, this guarantees that any putative counterexample to soundness corresponds to an infinite execution of the program and therefore cannot be a counterexample after all.  For total Hoare logic, the situation is a little more complex (cf.~\cite{Brotherston-Bornat-Calcagno:08}); we require that every infinite path in $\mathcal{P}$ contains in the preconditions along the path a \emph{trace} of well-founded measures that decrease, or ``progress'', infinitely often.  This ensures that any putative counterexample cannot after all be terminating and therefore is not a counterexample either.  To formulate traces, since we are using expressions interpreted as natural numbers, we adopt Simpson's definition from \emph{cyclic arithmetic}~\cite{Simpson:17}.

\begin{definition}[Trace~\cite{Simpson:17}]
Let $\mathcal{P} = (\mathcal{D,L})$ be a pre-proof and $(\htriple{P_k}{C_k}{Q_k})_{k{\geq}0}$ be a path in $\mathcal{P}$.
For terms $n$ and $n'$, we say that $n'$ is a \emph{precursor} of $n$ at $k$
if one of the following holds:
\begin{itemize}
\item $\htriple{P_k}{C_k}{Q_k}$ is the conclusion of an application of $\cycassignrule$ or $\cycsubstrule$, and $n' = \theta(n)$ where $\theta$ is the substitution used in the rule application; or
\item $\htriple{P_k}{C_k}{Q_k}$ is the conclusion of another rule, and $n' = n$.
\end{itemize}

A \emph{trace} following $(\htriple{P_k}{C_k}{Q_k})_{k{\geq}0}$ is a sequence of terms $(\trace_k)_{k{\geq}0}$ such that for every $k \geq 0$, the term $\trace_k$ occurs in $P_k$
and also one of the following conditions holds:
\begin{itemize}
\item either $\trace_{k+1}$ is a precursor of $\trace_k$ at $k$; or
\item there exists $n$ such that $(\trace_{k+1} < n) \in P_k$ and $n$ is a precursor of $\trace_k$ at $k$.
\end{itemize}
When the latter case holds, we say that the trace \emph{progresses} (at $k+1$). An \emph{infinitely progressing trace} is a trace that progresses at infinitely many points.
\end{definition}

\begin{definition}[Cyclic proof]
\label{defn:cyclic_proof_HL}
A pre-proof $\mathcal{P}$ is a \emph{cyclic proof} in $\PHL$ if there are infinitely many applications of symbolic execution rules along every infinite path in $\mathcal{P}$. It is furthermore a cyclic proof in $\THL$ if in addition there is an infinitely progressing trace along a tail of every path in $\mathcal{P}$.
\end{definition}

We remark that, by construction, cyclic provability in $\THL$ immediately implies cyclic provability in $\PHL$ as well. We now show two examples, of partial and total cyclic proofs respectively.

\begin{example}
\label{ex:cyclic_PHL}
Let $C$ stand for the program $\while{x>0}{x := x - 2;}$. Here we show
a cyclic proof of $\htriple{ x\geq 0 \land x \% 2=0}{C}{x = 0}$ in $\PHL$:

\begin{footnotesize}
\[\begin{prooftree}
\[\[
\justifies
 \htriple{x = 0 }{\cskip}{x=0}  \using $\cyclskiponerule$
 \]
 \justifies
 \htriple{x\geq 0 \land x \% 2=0 \land \neg(x < 0)}{\cskip}{x=0} \using $\cycconseqrule$
 \]
 \quad
 \[
 \[
 \htriple{x\geq 0 \land x \% 2=0}{C}{x=0} \tikz \node (bud2) {};
 \justifies
 \htriple{x'\geq 0 \land x' \% 2 =0 \land x'>0 \land x=x'-2}{C}{x{=}0} \using $\cycconseqrule$
 \]
 \justifies
 \htriple{x\geq 0 \land x \% 2 =0 \land x>0}{\code{x := x - 2;}C}{x{=}0}  \using $\cycassignrule$
 \]
\justifies
\htriple{x\geq 0 \land x \% 2 = 0}{C}{x = 0} \tikz \node (comp2) {};
\using $\cycwhilerule$
\end{prooftree}
\begin{tikzpicture}[overlay]
\tikzstyle{every path}+=[thick, rounded corners=0.5cm,red,dashed]
 \draw (bud2.north east) -- ++(2.8,0) |- (comp2.north east) [->];
\end{tikzpicture}
\]
\end{footnotesize}
\end{example}

\begin{example}
\label{ex:cyclic_THL}
Continuing with the program $C$ from the previous example, we show a total cyclic proof of
$\htriple{x = 2n}{C}{x = 0}$.

\begin{footnotesize}
\[\begin{prooftree}
\[\[
\justifies
 \htriple{x = 0 }{\cskip}{x=0}  \using $\cyclskiponerule$
 \]
 \justifies
 \htriple{x = 2n \land \neg(x > 0)}{\cskip}{x=0}  \using $\cycconseqrule$
 \] \qquad
 \[
 \[
 \[
 \htriple{\underline{x=2n}}{C}{x=0} \tikz \node (bud2) {};
 \justifies
 \htriple{\underline{x=2(n-1)}}{C}{x=0} \using $\substrule$
 \]
  \justifies
 \htriple{x'=2{n} \land x'>0 \land \underline{x=x'-2}}{C}{x{=}0} \using $\cycconseqrule$
 \]
 \justifies
 \htriple{\underline{x = 2n} \land x>0}{\code{x:=x-2}; C}{x{=}0}  \using $\cycassignrule$
 \]
\justifies
\htriple{\underline{x = 2n}}{C}{x = 0}  \using $\cycwhilerule$ \tikz \node (comp2) {};
\end{prooftree}
\begin{tikzpicture}[overlay]
\tikzstyle{every path}+=[thick, rounded corners=0.5cm,red,dashed]
 \draw (bud2.north east) -- ++(3.2,0) |- (comp2.north east) [->];
\end{tikzpicture}
\]
\end{footnotesize}
In this proof, the progressing trace from companion to bud is given by the \underline{underlined} terms involving $n$.
\end{example}

\begin{lemma}
\label{lem:cyclic_local_sound}
Let $\mathcal{P}$  be a pre-proof of $\htriple{P}{C}{Q}$ and suppose that $\htriple{P}{C}{Q}$ is invalid. Then there exists an infinite path $(\htriple{P_k}{C_k}{Q_k})_{k{\geq}0}$ in $\mathcal{P}$, beginning from $\htriple{P}{C}{Q}$, such that the following properties hold, for all $k \geq 0$:
\begin{enumerate}
\item for all $k \geq 0$, the triple $\htriple{P_k}{C_k}{Q_k}$ is invalid, meaning that either
\begin{enumerate}
\item $\exists \sigma,\sigma'.\ \sigma \models P_k$ and $\config{C_k}{\sigma} \exc^{m_k} \config{\cskip}{\sigma'}$ but $\sigma' \not\models Q_k$; or
\item (in $\THL$ only) $\exists \sigma.\ \sigma \models P_k$ but $\config{C_k}{\sigma}\diverges$.
\end{enumerate}
\item If the first possibility (i)(a) above holds, then the computation length $m_{k+1} \leq m_k$, and if the rule applied at $k$ is a symbolic execution rule then $m_{k+1} < m_k$.
\item (in $\THL$ only) if the second possibility (i)(b) above instead holds, and there is a trace $(\trace_k)_{k \geq i}$ following a tail of the path $(\htriple{P_k}{C_k}{Q_k})_{k \geq i}$, then the sequence of natural numbers defined by $(\sigma(\trace(k)))_{k \geq i}$ is monotonically decreasing, and strictly decreases at every progress point of the trace.
\end{enumerate}
\end{lemma}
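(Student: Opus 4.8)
The plan is to prove the statement by the standard \emph{invalidity-propagation} argument for cyclic systems: we build the required path top-down starting from the root $\htriple{P}{C}{Q}$, at each step appealing to a local \emph{step lemma} that is essentially the contrapositive of local soundness of the rules in Figure~\ref{fig:cyc:correctness}.

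First I would fix what witnesses invalidity of a triple $\htriple{P}{C}{Q}$: for case (i)(a), a pair $(\sigma,\sigma')$ together with a number $m$ such that $\sigma\models P$, $\config{C}{\sigma}\exc^m\config{\cskip}{\sigma'}$ and $\sigma'\not\models Q$; for case (i)(b) (in $\THL$), a state $\sigma$ with $\sigma\models P$ and $\config{C}{\sigma}\diverges$. The step lemma then reads: if $\htriple{P}{C}{Q}$ carries a witness and is the conclusion of a rule instance, then some premise carries a witness, and moreover (a) a first-kind witness with number $m$ propagates to a first-kind witness with number $m'\leq m$, with $m'<m$ whenever the rule is a symbolic execution rule ($\cyclskiprule$ drops $m$ by $1$, $\cycassignrule$ by $2$, $\cyccondrule$ by $1$, $\cycwhilerule$ by $1$ or $2$, while $\cycconseqrule$ and $\substrule$ leave it unchanged); and (b) a divergence witness propagates to the canonically determined divergence witness for the relevant premise -- the same state $\sigma$, except that $\cycassignrule$ also records $\sigma(x)$ in the fresh $x'$ and $\substrule$ sets $z$ to $\sem{t}\sigma$. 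This is a short, uniform case analysis over the seven rules, using the operational semantics for $\cyclskiprule,\cycassignrule,\cyccondrule,\cycwhilerule$ (each ``consumes'' at least one computation step) and the substitution lemma together with the freshness side-conditions for $\cycassignrule,\substrule$. Crucially, the only axiom, $\cyclskiponerule$, has conclusion $\htriple{P}{\cskip}{P}$, which is valid in both $\PHL$ and $\THL$, so an invalid triple is never an axiom leaf; and since every bud has a companion labelled by an identical judgement, we can always continue. Iterating the step lemma from any chosen witness for $\htriple{P}{C}{Q}$ at the root (crossing backlinks without change of judgement or witness) therefore produces an infinite path from the root along which every triple carries a witness. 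Since the branch taken at $\cyccondrule$ / $\cycwhilerule$ is forced by whether the current witness satisfies $B$, and since the step lemma sends first-kind witnesses to first-kind witnesses and divergence witnesses to divergence witnesses, the whole path is uniformly of one kind, determined by the witness chosen at the root; properties 1 and 2 are then immediate.

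It remains to establish property 3, which is where the real work lies. Fix a case-(i)(b) path with its canonical divergence witnesses $(\sigma_k)_{k\geq0}$ (so $\sigma_k\models P_k$ and $\config{C_k}{\sigma_k}\diverges$), and let $(\trace_k)_{k\geq i}$ be a trace following a tail. I would prove, by case analysis on the rule applied at each $k\geq i$, the two facts: at a non-progressing step, $\sem{\trace_{k+1}}\sigma_{k+1}=\sem{\trace_k}\sigma_k$; and at a progressing step, $\sem{\trace_{k+1}}\sigma_{k+1}<\sem{\trace_k}\sigma_k$; together these give the claimed monotone, eventually-strictly-decreasing behaviour (and crossing a backlink preserves the value, the judgement being unchanged). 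For the non-progressing case: if the rule is $\cycconseqrule,\cyccondrule,\cyclskiprule$ or $\cycwhilerule$ then $\sigma_{k+1}=\sigma_k$ and $\trace_{k+1}=\trace_k$; if it is $\cycassignrule$ with fresh $x'$ then the precursor relation gives $\trace_{k+1}=\trace_k[x'/x]$, and unfolding $\sigma_{k+1}=\sigma_k[x\mapsto\sem{E}\sigma_k][x'\mapsto\sigma_k(x)]$ and using $x'\notin\fv(\trace_k)$ with the substitution lemma yields the equality; the case of $\substrule$ with substitution $[t/z]$ is analogous, using $z\notin\fv(\trace_k)$ (which follows from $z\notin\fv(t)$). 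For the progressing case: by definition there is a precursor $n$ of $\trace_k$ at $k$ with the conjunct $\trace_{k+1}<n$ present in the relevant precondition; the same precursor computation as above shows the value of $n$ in that state equals $\sem{\trace_k}\sigma_k$, while the witness satisfies that precondition, so $\sem{\trace_{k+1}}\sigma_{k+1}<\sem{\trace_k}\sigma_k$.

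The main obstacle is precisely this last case analysis: making the ``precursor'' relation of the trace definition line up with the canonical propagation of the divergence witness through $\cycassignrule$ and $\substrule$ (tracking which variable carries which value, and invoking freshness and the substitution lemma at exactly the right places), and being careful about which precondition actually carries the $\trace_{k+1}<n$ fact at a progress point. By contrast, the construction of the path and properties 1 and 2 are routine, and the case (i)(a) analysis amounts to counting operational-semantics steps.
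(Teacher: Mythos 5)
Your proposal is correct and takes essentially the same approach as the paper, which proves this lemma only by a one-sentence sketch (``construct the required path and prove its needed properties inductively, by analysis of each proof rule''). Your rule-by-rule invalidity-propagation argument --- the step counting for property (ii), the canonical divergence-witness propagation, and the precursor/freshness bookkeeping for property (iii) --- is exactly the intended elaboration of that sketch.
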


\begin{proof}(Sketch)
We construct the required path and prove its needed properties inductively, by analysis of each proof rule in Figure~\ref{fig:cyc:correctness}.
\end{proof}

\begin{theorem}[Soundness]
\label{thm:cyclic_HL_sound}
If $\htriple{P}{C}{Q}$ has a cyclic proof in $\PHL$ (resp. $\THL$) then it is valid in $\PHL$ (resp. $\THL$).
\end{theorem}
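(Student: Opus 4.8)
The plan is to argue by contradiction, delegating the real work to Lemma~\ref{lem:cyclic_local_sound} and then closing with a standard infinite-descent argument over the natural numbers. So suppose $\mathcal{P}$ is a cyclic proof of $\htriple{P}{C}{Q}$ in $\PHL$ (resp.\ $\THL$) but that, contrary to the claim, $\htriple{P}{C}{Q}$ is invalid. By Lemma~\ref{lem:cyclic_local_sound} there is an infinite path $(\htriple{P_k}{C_k}{Q_k})_{k\geq 0}$ in $\mathcal{P}$, beginning at the root, along which every triple is invalid and which carries the two ``measure'' invariants of the lemma. We then derive a contradiction from whichever global soundness condition $\mathcal{P}$ is assumed to satisfy. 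In the $\PHL$ case invalidity of each $\htriple{P_k}{C_k}{Q_k}$ can only be witnessed by a finite bad computation, so we obtain a sequence of natural-number computation lengths $(m_k)_{k\geq 0}$ that, by the lemma, never increases and strictly decreases at every application of a symbolic execution rule along the path. Since the $\PHL$ soundness condition guarantees infinitely many such applications, $(m_k)_{k\geq 0}$ would strictly decrease infinitely often, which is impossible; contradiction.

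For $\THL$ we first note that a finite counterexample is inherited downward: if the triple at position $k$ is invalid because of a finite bad computation, the counterexample Lemma~\ref{lem:cyclic_local_sound} propagates to position $k+1$ is a sub-computation of it and hence again finite. Consequently, along the counterexample path either a finite bad computation witnesses invalidity for all sufficiently large $k$, or a diverging run witnesses it at every $k$. In the first case we simply re-run the $\PHL$ argument (a $\THL$ cyclic proof is in particular a $\PHL$ cyclic proof, so symbolic execution rules still fire infinitely often along the path), again getting an infinitely-often strictly decreasing sequence of naturals. In the second case we invoke the extra $\THL$ condition: there is an infinitely progressing trace $(\trace_k)_{k\geq i}$ along a tail of the path, and by the lemma, evaluating this trace in the divergence-witnessing states gives a monotonically decreasing sequence of naturals that strictly decreases at each of its infinitely many progress points --- once more impossible. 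Either way we reach a contradiction, so $\htriple{P}{C}{Q}$ is valid.

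The mathematical content of the theorem sits almost entirely inside Lemma~\ref{lem:cyclic_local_sound}: the case analysis over the rules of Figure~\ref{fig:cyc:correctness}, constructing the counterexample path one node at a time while maintaining the computation-length and trace-value invariants, is where the effort goes, and it is the step I expect to be the main obstacle. In the wrapper above, the one delicate point is the $\THL$ dichotomy --- checking that the lemma's propagation of counterexamples really does keep the finite case finite down the path, so that ``eventually always a finite counterexample'' versus ``always a diverging counterexample'' is exhaustive, and that the tail hosting the progressing trace can be taken inside the region where divergence witnesses are available.
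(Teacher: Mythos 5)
Your proposal is correct and follows essentially the same route as the paper: contradiction via Lemma~\ref{lem:cyclic_local_sound}, with the $\PHL$ case killed by the infinitely-often strictly decreasing computation lengths $(m_k)$ and the $\THL$ case by the infinitely progressing trace values. Your explicit dichotomy for $\THL$ (finite counterexamples persist once they appear, so either the $\PHL$ argument applies to a tail or divergence witnesses invalidity everywhere) is just a slightly more spelled-out version of the paper's remark that possibility (i)(a) can be ruled out because the pre-proof already satisfies the $\PHL$ soundness condition.
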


\begin{proof}
Suppose first that $\htriple{P}{C}{Q}$ has a cyclic proof $\mathcal{P}$ in $\PHL$ but is invalid there. By Lemma~\ref{lem:cyclic_local_sound} we can build an infinite path $(\htriple{P_k}{C_k}{Q_k})_{k{\geq}0}$ in $\mathcal{P}$, beginning from $\htriple{P}{C}{Q}$, satisfying properties (i)(a) and (ii) above.  In particular, we have an infinite, monotonically decreasing sequence $(m_k)_{k \geq 0}$ of natural numbers such that for all $k \geq 0$ we have $\config{C_k}{\sigma} \exc^{m_k} \config{\cskip}{\sigma'}$ (for some $\sigma$ and $\sigma'$).  Since $\mathcal{P}$ is a cyclic proof in $\PHL$, this path contains infinitely many applications of symbolic execution rules, and thus by property (ii) the sequence $(m_k)_{k \geq 0}$, which is a contradiction.

Next suppose that $\mathcal{P}$ is also a cyclic proof in $\THL$ but is not valid in $\THL$.  We apply Lemma~\ref{lem:cyclic_local_sound} to obtain an infinite path in $\mathcal{P}$ as above, satisfying properties (i)(b) and (iii); since it is already a cyclic proof in $\PHL$, we may rule out $(i)(a)$ as a possibility (and need not consider (ii) either). Since $\mathcal{P}$ is a cyclic proof in $\THL$, there is an infinitely progressing trace $(\trace_k)_{k \geq i}$ following some tail ($k \geq i$) of this path.  Thus by property (iii) there is an infinite, monotonically decreasing sequence $(\sigma(\trace_k))_{k \geq i}$ of numbers that moreover strictly decreases infinitely often; again a contradiction.  Thus $\htriple{P}{C}{Q}$ is valid in $\THL$ after all.
\end{proof}

\begin{lemma}[Proof translation]
\label{lem:HL_translation}
If $\htriple{P}{C}{Q}$ is provable in $\PHL$ (resp. $\THL$), then for all statements $C'$ and assertions $R$, there is a cyclic pre-proof of $\htriple{P}{C;C'}{R}$ in $\PHL$ (resp. $\THL$) in which all open leaves are occurrences of $\htriple{Q}{C'}{R}$:
\[\begin{array}{c@{\hspace{1cm}}c@{\hspace{1cm}}c}
\begin{prooftree}
\leadsto
\htriple{P}{C}{Q}
\end{prooftree}
& \Longrightarrow &
\begin{prooftree}
\htriple{Q}{C'}{R}
\leadsto
\htriple{P}{C;C'}{R}
\end{prooftree}
\end{array}\]
Moreover, any strongly connected subgraph of the pre-proof created by the translation satisfies the global soundness condition for $\PHL$ (resp. $\THL$) cyclic proofs.
\end{lemma}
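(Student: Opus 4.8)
Here is a plan for proving Lemma~\ref{lem:HL_translation}.

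The plan is to prove the statement, strengthened by its ``moreover'' clause, by structural induction on the axiomatic $\PHL$ (resp.\ $\THL$) derivation of $\htriple{P}{C}{Q}$, translating the outermost rule into the matching continuation-style cyclic rule(s) of Figure~\ref{fig:cyc:correctness} and feeding the ambient continuation $C'$ and postcondition $R$ to the inductive hypothesis on the immediate subderivations. Carrying the ``moreover'' clause along the induction is what makes the argument close: when I glue together the pre-proofs returned by the inductive hypothesis I already know that their internal strongly connected subgraphs satisfy the relevant global condition, so at each step I only have to inspect the \emph{new} cycles created by that step. Since the partial and total systems share all rules except the loop rule, every case but the loop case is handled uniformly for both.

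The non-looping cases are mechanical. For $\cskiprule$ and $\cassignrule$ I prepend $\cyclskiprule$, resp.\ $\cycassignrule$, whose single premise is exactly the desired open leaf $\htriple{Q}{C'}{R}$. For $\csequencerule$ from $\htriple{P}{C_1}{R'}$ and $\htriple{R'}{C_2}{Q}$, I invoke the inductive hypothesis on the second premise (continuation $C'$, postcondition $R$) to get a pre-proof of $\htriple{R'}{C_2;C'}{R}$ with open leaves $\htriple{Q}{C'}{R}$, then on the first premise (continuation $C_2;C'$, postcondition $R$), and splice the former into the open leaves $\htriple{R'}{C_2;C'}{R}$ of the latter, treating sequential composition as associative. $\ccondrule$ becomes $\cyccondrule$ with the inductive hypothesis applied to both branches; $\cconseqrule$ is handled by wrapping one $\cycconseqrule$ at the root (for the strengthened precondition) and one above each open leaf (for the weakened postcondition). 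None of these steps introduces a backlink, so every strongly connected subgraph of the result already lives inside a pre-proof returned by the inductive hypothesis and the ``moreover'' clause is immediate.

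The loop cases are the crux, since they are the only place backlinks appear. For the partial rule $\cwhilerule$, with conclusion $\htriple{P}{\while{B}{C}}{P \wedge \neg B}$, I apply $\cycwhilerule$ at the root $\htriple{P}{\while{B}{C};C'}{R}$; its left premise $\htriple{P \wedge \neg B}{C'}{R}$ is already an open leaf of the required form, and I obtain a pre-proof for its right premise $\htriple{P \wedge B}{C;\while{B}{C};C'}{R}$ from the inductive hypothesis on the body $\htriple{P \wedge B}{C}{P}$ with continuation $\while{B}{C};C'$ and postcondition $R$. The open leaves of that pre-proof are copies of $\htriple{P}{\while{B}{C};C'}{R}$ --- exactly the conclusion of the $\cycwhilerule$ step --- so I backlink each of them to it. The only exit from the companion toward a bud runs through its right premise, i.e.\ through that $\cycwhilerule$ application, which is a symbolic execution rule; so, together with the inductive hypothesis on the body, every new strongly connected subgraph satisfies the $\PHL$ condition. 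For the total rule $\cwhiletotalrule$, with premise $\htriple{P \wedge B \wedge t = n}{C}{P \wedge t < n}$ and $n$ fresh, the skeleton is the same --- unfold once with $\cycwhilerule$, recurse on the body, backlink --- but I must additionally install an infinitely progressing trace around the new cycle. The descent comes from $t$ and from the strict conjunct $t < n$ in the body's postcondition: on entering the body I step from $\htriple{P \wedge B}{\dots}{R}$ to $\htriple{P \wedge B \wedge t = n}{\dots}{R}$ by $\cycconseqrule$, using the entailment $P \wedge B \models P \wedge B \wedge t = n$ justified by freshness of $n$ exactly as in Remark~\ref{rem:total_partial_provable}, and on leaving it I step from the body's postcondition $\htriple{P \wedge t < n}{\dots}{R}$ back toward a copy of the companion by $\cycconseqrule$ via $P \wedge t < n \models P$; where needed I also use $\cycsubstrule$ and further $\cycconseqrule$ steps to make the bud and its companion carry syntactically identical judgements and to restore the exact open-leaf form $\htriple{P \wedge \neg B}{C'}{R}$. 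To keep one trace term defined all along the cycle I first normalise the body's subderivation so that every assertion in it carries a trivially true conjunct mentioning $n$ (for instance $0 \le n$) --- which leaves $\THL$-provability untouched --- and likewise let the companion precondition carry a trivial conjunct mentioning $t$; the trace then follows this term through the translated body and the $\cycwhilerule$ step, and progresses at the $\cycconseqrule$ step whose precondition exhibits $t < n$. Since $\cycwhilerule$ remains a symbolic execution rule, each new strongly connected subgraph meets both the $\PHL$ and the $\THL$ global conditions.

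I expect the real obstacle to be precisely the bookkeeping in this last case: arranging that the descent term occurs syntactically in \emph{every} precondition along the new cycle (so that it is a bona fide trace), while simultaneously keeping the bud and its companion labelled by \emph{identical} proof judgements and ensuring the lone descent $t < n$ lands at a point the trace definition actually records as progress. The tension between the forward assignment axiom (which renames program variables and can destroy a syntactic occurrence of $t$), the freshness convention for $n$, and the side-condition $t \notin \fv(C)$ on $\cycsubstrule$ is what makes this delicate; everything else --- the non-looping cases and the partial-loop case --- is essentially routine.
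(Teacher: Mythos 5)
Your proposal is correct and follows essentially the same route as the paper's proof: structural induction on the axiomatic derivation, a one-to-one translation of each rule into its continuation-style counterpart, backlinks created only in the two loop cases, and, for $\cwhiletotalrule$, recording $t$ in a fresh variable $n$ via $\cycconseqrule$ and using $\cycsubstrule$ plus consequence steps at the buds so that the trace on $n$ progresses where $t<n$ (equivalently $n'<n$) is introduced. The only divergence is your extra ``normalisation'' step padding every assertion in the body's subderivation with a trivial conjunct mentioning $n$ so that the trace term occurs syntactically throughout the cycle --- a legitimate concern that the paper's proof leaves implicit rather than addressing explicitly.
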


\begin{proof}
We proceed by structural induction on the Hoare logic proof of $\htriple{P}{C}{Q}$,
distinguishing cases on the last rule applied in the proof and assuming arbitrary $C'$ and $R$.

\proofcase{\cskiprule} The proof transformation is as follows:
\[\begin{array}{c@{\hspace{1cm}}c@{\hspace{1cm}}c}
\begin{prooftree}
\phantom{P}
\justifies
\htriple{P}{\cskip}{P}
\using \cskiprule
\end{prooftree}
& \Longrightarrow &
\begin{prooftree}
\htriple{P}{C'}{R}
\justifies
\htriple{P}{\cskip;C'}{R} \using \cyclskiprule
\end{prooftree}
\end{array}\]
The only open leaf in the cyclic pre-proof is an instance of $\htriple{P}{C'}{R}$, as required.

\proofcase{\cassignrule}
\[\begin{array}{c@{\hspace{1cm}}c@{\hspace{1cm}}c}
\begin{prooftree}
\phantom{P}
\justifies
\htriple{P}{x := E}{P[x'/x] \land x = E[x'/x]}
\using \cassignrule
\end{prooftree}
& \Longrightarrow &
\begin{prooftree}
\htriple{P[x'/x] \wedge x = E[x'/x]}{C'}{R}
\justifies
\htriple{P}{x:=E; C'}{R}
\using \cycassignrule
\end{prooftree}
\end{array}\]
The only open leaf in this pre-proof is an instance of $\htriple{P[x'/x] \wedge x = E[x'/x]}{C'}{R}$, as required.

\proofcase{\cconseqrule}
\[\begin{array}{c@{\hspace{1cm}}c@{\hspace{1cm}}c}
\begin{prooftree}
\[ \leadsto
P \models P' \quad \htriple{P'}{C}{Q'} \quad Q' \models Q \]
\justifies
\htriple{P}{C}{Q}
\using \cconseqrule
\end{prooftree}
& \Longrightarrow &
\begin{prooftree}
P \models P' \quad
\[\[\htriple{Q}{C}{R}
\justifies
\htriple{Q'}{C'}{R} \using \cycconseqrule\]
\leadsto
\htriple{P'}{C;C'}{R} \using \;\indhyp \]
\justifies
\htriple{P}{C;C'}{R} \using \cycconseqrule
\end{prooftree}
\end{array}\]
On the RHS we first use the consequence rule to transform the precondition $P$ to $P'$, then apply the induction hypothesis (marked $\indhyp$ in the derivation) to obtain a cyclic pre-proof with open leaves of the form $\htriple{Q'}{C'}{R}$.  By applying the consequence rule to each of these open leaves we obtain a pre-proof with open leaves of the form $\htriple{Q}{C'}{R}$ as required.

\proofcase{\csequencerule}
\[\begin{array}{c@{\hspace{1cm}}c@{\hspace{1cm}}c}
\begin{prooftree}
\[\leadsto
\htriple{P}{C_1}{S} \]
\quad
\[\leadsto
\htriple{S}{C_2}{Q}\]
\justifies
\htriple{P}{C_1;C_2}{Q}
\using C
\end{prooftree}
& \Longrightarrow &
\begin{prooftree}
\[\htriple{Q}{C'}{R}
\leadsto
\htriple{S}{C_2 ; C'}{R} \using \;\indhyp \]
\leadsto
\shiftright 1.2em \htriple{P}{C_1;C_2;C'}{R} \using \;\indhyp
\end{prooftree}
\end{array}\]
Here, we first use the induction hypothesis with the first premise $\htriple{P}{C_1}{S}$ of $\htriple{P}{C_1}{S}$ to yield a pre-proof of $\htriple{P}{C_1;C_2;C'}{R}$ with open leaves all of form $\htriple{S}{C_2 ; C'}{R}$.  Then, we use the induction hypothesis with the second premise to expand  these leaves into pre-proofs with open leaves all of form $\htriple{Q}{C'}{R}$, as needed.

\proofcase{\ccondrule}
\[\begin{array}{c@{\hspace{0.7cm}}c@{\hspace{0.7cm}}c}
\begin{prooftree}
\[ \leadsto
\htriple{P \wedge B}{C_1}{Q} \]
\[ \leadsto
\htriple{P \wedge \neg B}{C_2}{Q} \]
\justifies
\htriple{P}{\ifelse{B}{C_1}{C_2}}{Q}
\using \ccondrule
\end{prooftree}
& \Longrightarrow &
\begin{prooftree}
\[\htriple{Q}{C'}{R}
\leadsto
\htriple{P \wedge B}{C_1;C'}{R} \using \;\indhyp \]
\quad
\[\htriple{Q}{C'}{R}
\leadsto
\htriple{P \wedge \neg B}{C_2;C'}{R} \using \;\indhyp \]
\vspace{0.1mm}
\justifies
\htriple{P}{\ifelse{B}{C_1}{C_2};C'}{R} \using \cyccondrule
\end{prooftree}
\end{array}\]

\proofcase{\cwhilerule} Here the proof transformation involves creating (possibly many) new backlinks:
\[\begin{array}{c@{\hspace{0.3cm}}c@{\hspace{0.3cm}}c}
\begin{prooftree}
\[\leadsto
\htriple{P\wedge B}{C}{P} \]
\justifies
\htriple{P}{\while{B}{C}}{P \land \neg B}
\using \cwhilerule
\end{prooftree}
& \Longrightarrow &
\begin{prooftree}
\htriple{P \land \neg B}{C'}{R}
\;
\[\htriple{P}{\while{B}{C; C'}}{R} \tikz \node (bud5) {};
\leadsto
\htriple{P \land B}{C; \while{B}{C; C'}}{R} \using \;\indhyp \]
\vspace{0.1mm}
\justifies
\htriple{P}{\while{B}{C; C'}}{R} \tikz \node (comp5) {};
\using \cycwhilerule
\begin{tikzpicture}[overlay]
\tikzstyle{every path}+=[thick, rounded corners=0.5cm,red,dashed]
\draw (bud5.north east) -- ++(2.1,0) |- (comp5.north east) [->];
\end{tikzpicture}
\end{prooftree}
\end{array}\]
In the RHS pre-proof, we first apply the cyclic rule $\cycwhilerule$ to unfold the {\tt while} loop.  The resulting left-hand premise is an open leaf of the permitted form, i.e. $\htriple{P \wedge \neg B}{C'}{R}$.  For the right-hand premise, using the induction hypothesis we can obtain a pre-proof of $\htriple{P \land B}{C; \while{B}{C; C'}}{R}$ with all open leaves of the form $\htriple{P}{\while{B}{C; C'}}{R}$.  These leaves are all back-linked to the conclusion of the pre-proof, which is identical.  We additionally note that at least one symbolic execution rule is applied along the path from this companion to any of these buds, namely the instance of $\cycwhilerule$ itself.

\proofcase{\cwhiletotalrule} By assumption, we have a proof of the form:
\[\begin{prooftree}
\[\leadsto
\htriple{P \wedge B \land t = n}{C}{P \land t<n} \]
\justifies
\htriple{P}{\while{B}{C}}{P \wedge \neg B}
\using \cwhiletotalrule
\end{prooftree}\]
We derive a cyclic pre-proof of $\htriple{P}{\while{B}{C; C'}}{R}$ as follows.
\[\begin{prooftree}
\[
\[\htriple{P \land \neg B}{C'}{R}
\justifies
\htriple{P \land \neg B \wedge t = n}{C'}{R} \using \cycconseqrule \]
\;
\[\[\[\[
\htriple{P \wedge t=n}{\while{B}{C; C'}}{R} \tikz \node (bud5) {};
\justifies
\htriple{P \wedge t=n'}{\while{B}{C; C'}}{R} \using \cycsubstrule \]
\justifies
\htriple{P \wedge t=n' \wedge n' < n}{\while{B}{C; C'}}{R} \using \cycconseqrule \]
\justifies
\htriple{P \wedge t<n}{\while{B}{C; C'}}{R} \using \cycconseqrule \]
\leadsto
\htriple{P \wedge B \wedge t = n}{C; \while{B}{C; C'}}{R} \using \;\indhyp \]
\justifies
\htriple{P \wedge t = n}{\while{B}{C; C'}}{R} \tikz \node (comp5) {};
\using \cycwhilerule \]
\justifies
\htriple{P}{\while{B}{C; C'}}{R} \using \cycconseqrule
\begin{tikzpicture}[overlay]
\tikzstyle{every path}+=[thick, rounded corners=0.5cm,red,dashed]
\draw (bud5.north east) -- ++(2.6,0) |- (comp5.north east) [->];
\end{tikzpicture}
\end{prooftree}\]
This construction is similar to the previous case, with a little more wrangling to deal with the termination measure $t$.  First, before unfolding the {\tt while} loop we ``record'' the value of $t$ as a fresh variable $n$ to obtain $t=n$ in the precondition.  In the left hand premise of $\cycwhilerule$ this fact is not needed and is discarded again to obtain an open leaf of the permitted form $\htriple{P \land \neg B}{C'}{R}$.  In the right-hand premise, we apply the induction hypothesis to obtain a pre-proof with open leaves all of form $\htriple{P \wedge t<n}{\while{B}{C; C'}}{R}$.  In each of these open leaves we introduce another fresh variable $n'$ to record the new value of $t$ as $t=n'$, where $n'<n$, thus recognising these transformed leaves as substitution instances of the conclusion of $\cycwhilerule$, to which we form backlinks.

Note that we have a trace on $n$ and $n'$ from the companion node in this proof to each of the buds, which progresses when we ``jump'' from $n$ to $n'$ (at the point where $n' < n$ is introduced).
\end{proof}

\begin{theorem}[Relative completeness]
\label{thm:HL_translation}
If $\htriple{P}{C}{Q}$ is provable in $\PHL$ (resp. $\THL$) then it has a cyclic proof in $\PHL$ (resp. $\THL$).
\end{theorem}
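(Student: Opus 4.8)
The plan is to obtain this as a short corollary of the proof-translation lemma, Lemma~\ref{lem:HL_translation}. Given a $\PHL$ (resp. $\THL$) proof of $\htriple{P}{C}{Q}$, I would apply that lemma with $C' := \cskip$ and $R := Q$, yielding a cyclic pre-proof of $\htriple{P}{C;\cskip}{Q}$ all of whose open leaves are occurrences of $\htriple{Q}{\cskip}{Q}$. Since, as noted in Section~\ref{sec:pl}, we harmlessly identify $C$ with $C;\cskip$, this is already a pre-proof of the target judgement up to that identification.

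Next I would close every open leaf $\htriple{Q}{\cskip}{Q}$ by an instance of the axiom $\cyclskiponerule$, which is exactly a triple of that shape. The result is a genuine pre-proof in the sense of Definition~\ref{defn:pre_proof}: a finite derivation tree whose only remaining buds are those already equipped with backlinks by the construction of Lemma~\ref{lem:HL_translation}. Note that the newly added axiom leaves lie on no cycle, so the strongly connected subgraphs of this pre-proof are exactly those of the pre-proof returned by the lemma, and the ``moreover'' clause of Lemma~\ref{lem:HL_translation} continues to apply to all of them.

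It then remains to verify the global soundness condition of Definition~\ref{defn:cyclic_proof_HL}. Here I would invoke the standard fact about regular infinite trees: since the derivation tree is finite and every bud links to a companion, any infinite path in the pre-proof visits buds, hence their companions, infinitely often, so some tail of the path lies entirely within a single strongly connected subgraph. By the ``moreover'' clause that subgraph satisfies the global soundness condition, i.e. along the tail there are infinitely many symbolic execution rules and, in the $\THL$ case, an infinitely progressing trace; since it is a tail of the original path, the original path inherits these properties. I expect the main obstacle to be exactly this last bookkeeping step: one must check that closing the $\cskip$-leaves adds no new infinite paths and leaves the strongly connected components untouched, and then argue carefully that the per-subgraph soundness guarantee supplied by the lemma upgrades to the ``every infinite path'' formulation of Definition~\ref{defn:cyclic_proof_HL} --- in the total case with the extra care that the progressing trace is only claimed along a tail. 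All the substantive work, namely the inductive construction of the pre-proof and the verification that its strongly connected subgraphs progress, has already been carried out in Lemma~\ref{lem:HL_translation}.
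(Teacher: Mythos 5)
Your proposal is correct and follows essentially the same route as the paper's own proof: instantiate Lemma~\ref{lem:HL_translation} with $C'=\cskip$ and $R=Q$, use the identification of $C;\cskip$ with $C$, close the open leaves (which are instances of the $\cyclskiponerule$ axiom), and observe that the only strongly connected subgraphs are those produced by the translations of $\cwhilerule$/$\cwhiletotalrule$, which satisfy the relevant global soundness condition. Your final bookkeeping step --- that any infinite path eventually stays within one such subgraph and so inherits the condition --- is a point the paper leaves implicit, but it is the same argument.
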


\begin{proof}
By taking $C'=\cskip$ and $R=Q$ in Lemma~\ref{lem:HL_translation} and using the elision of $C; \cskip$ to $C$, we obtain a pre-proof $\mathcal{P}$ of $\htriple{P}{C}{Q}$ in $\PHL$ (resp. $\THL$) in which all open leaves are of the form $\htriple{P}{\cskip}{P}$ and thus can be immediately closed by applications of the rule $\cyclskiponerule$.

To see that $\mathcal{P}$ is a genuine cyclic proof, observe that the only strongly connected subgraphs (i.e. collections of cycles) in $\mathcal{P}$ are created by translations of $\cwhilerule$ (in $\PHL$) and $\cwhiletotalrule$ (in $\THL$); these are disjoint from one another by construction, and satisfy the global soundness conditions for $\PHL$ and $\THL$ respectively.  Thus $\mathcal{P}$ itself is a cyclic proof in $\PHL$ / $\THL$ as required.
\end{proof}

\section{Cyclic proofs in reverse Hoare logic}
\label{sec:cyclic:il:hoare}
This section presents a system of cyclic proofs for $\PRHL$ and $\TRHL$, i.e. partial and total reverse Hoare logic, together with their soundness and their subsumption of the corresponding standard proof systems from Section~\ref{sec.two.hoare}. By and large, their development mirrors that of the cyclic proof systems for standard Hoare logic in the previous section.

First, we give our cyclic proof rules for reverse Hoare triples
in Figure~\ref{fig:cyclic_incorrectness_rules}. Like the cyclic proof rules for standard Hoare triples in the previous section, they are formulated in ``continuation style'' (with conclusions of general form $\ihtriple{P}{C;C'}{Q}$, with a rule for explicit substitution $ \isubstrule$ and an unfolding rule for {\tt while} loops $\icycwhilerule$.

\begin{figure}[tb]
\[\begin{array}{c}
\begin{prooftree}
\justifies
\ihtriple{P}{\cskip}{P}
\using \crskiponerule
\end{prooftree}
\qquad
\begin{prooftree}
\ihtriple{P}{C}{Q}
\justifies
\ihtriple{P}{\cskip; C}{Q}
\using \crskiprule
\end{prooftree}
\qquad
\begin{prooftree}
\ihtriple{P[x'/x] \wedge x=E[x'/x]}{C}{Q}
\justifies
\ihtriple{P}{x:=E; C}{Q}
\using \icycassignrule
\end{prooftree}
\\[1ex] \\
\begin{prooftree}
\ihtriple{P}{C}{Q}
\justifies
\ihtriple{P[t/z]}{C}{Q[t/z]}
\using  \isubstrule
\end{prooftree}
\qquad
\begin{prooftree}
P \models P' \quad \ihtriple{P}{C}{Q} \quad
Q' \models Q
\justifies
\ihtriple{P'}{C}{Q'}
\using \crconseqrule
\end{prooftree}
\qquad
\begin{prooftree}
  \ihtriple{P_1}{C}{Q_1} \quad
\ihtriple{P_2}{C}{Q_2}
\justifies
\ihtriple{P_1 \lor P_2}{C}{Q_1 \lor Q_2}
\using \idisjtrule
\end{prooftree}
\\[1ex] \\
\begin{prooftree}
\ihtriple{P \wedge B}{C_1; C'}{Q}
\justifies
\ihtriple{P \wedge B}{\ifelse{B}{C_1}{C_2}; C'}{Q}
\using \icyccondtrule
\end{prooftree}
\qquad
\begin{prooftree}
\ihtriple{P \wedge \neg B}{C_2; C'}{Q}
\justifies
\ihtriple{P \wedge \neg B}{\ifelse{B}{C_1}{C_2} ; C'}{Q}
\using \icyccondfrule
\end{prooftree}
\\ [1ex] \\
\begin{prooftree}
\ihtriple{P \wedge \neg B}{C'}{Q}
\justifies
\ihtriple{P}{\while{B}{C}; C'}{Q}
\using \icycwhileskiprule
\end{prooftree}
\qquad
\begin{prooftree}
\ihtriple{P \wedge B}{C ; \while{B}{C} ; C'}{Q}
\justifies
\ihtriple{P}{\while{B}{C} ; C'}{Q}
\using \icycwhilerule
\end{prooftree}
\end{array}\]
\caption{\label{fig:cyclic_incorrectness_rules} Cyclic proof rules for reverse Hoare triples.
$x'$ is fresh in $\cycassignrule$, and $z$ and $t$ are not in $\fv(C)$ in $\isubstrule$.}
\end{figure}

\begin{definition}[Trace]
Let $\mathcal{P}$ be a pre-proof and $(\ihtriple{P_k}{C_k}{Q_k})_{k{\geq}0}$ be a path in $\mathcal{P}$.
For terms $n$ and $n'$, we say that $n'$ is a \emph{precursor} of $n$ at $k$ if one of the following holds:
\begin{itemize}
\item either $\ihtriple{P_k}{C_k}{Q_k}$ is the conclusion of an application of $\isubstrule$ 
and $n' = \theta(n)$, where $\theta$ is the substitution used in the rule application; or
\item $\ihtriple{P_i}{C_i}{Q_i}$ is the conclusion of another rule, and $n' = n$.
\end{itemize}

A \emph{trace} following $(\ihtriple{P_k}{C_k}{Q_k})_{k{\geq}0}$ is a sequence of terms $(\trace_k)_{k{\geq}0}$ such that for every $k \geq 0$, the term $\trace_k$ occurs in either $P_k$ or $Q_k$ and one of the following conditions holds:
\begin{itemize}
\item either $\trace_{i+1}$ is a precursor of $\trace_k$ at $k$; or
\item there exists $(t = \trace_{k+1}) \in P_i$ such that $\trace_k < \trace_{k+1}$
where $\trace_k$ is a precursor
of $\trace_{k+1}$ at $i$.
\end{itemize}
In the latter case, we say that the trace \emph{progresses} at $k+1$. An \emph{infinitely progressing trace} is a trace that progresses at infinitely many points.
\end{definition}

A pre-proof is a genuine cyclic proof if it satisfies the following global soundness condition(s).

\begin{definition}[Cyclic proof]
\label{defn:cyclic_proof_RHL}
A pre-proof $\mathcal{P}$ is a \emph{cyclic proof} in $\PRHL$ if there are infinitely
many symbolic execution rule applications along every infinite path in $\mathcal{P}$.
If in addition there is an infinitely progressing trace along a tail of every infinite path in $\mathcal{P}$, it is also a cyclic proof in $\TRHL$.
\end{definition}

Similarly to Definition~\ref{defn:cyclic_proof_HL}, every cyclic proof in $\TRHL$ is also a cyclic proof in $\PRHL$.

\begin{example}\label{ex:cyclic_PRHL}
We show a cyclic proof in $\TRHL$ for  the reverse Hoare triple
$$\ihtriple{x = x_0-2n}{\code{\while{x>0} {x := x - 2;}}}{\exists k.\ x =x_0-2k \wedge \neg x >{0}}$$
In the following cyclic proof, we use  $\code{C}$ to denote the program $\code{\while{x>0} {x := x - 2;}}$ and
 $Q$ for the post-condition $\exists k.\ x =x_0-2k \wedge \neg x >{0}$.
\begin{footnotesize}
\[\begin{prooftree}
\[\[\[
\justifies
\ihtriple{x =x_0-2n \wedge \neg x >{0}}{\cskip}{x =x_0-2n \wedge \neg x >{0}} \using
 \crskiponerule
 \]
 \justifies
\ihtriple{x =x_0-2n \wedge \neg x >{0}}{\cskip}{Q} \using
 \iconseqrule
 \]
 \justifies
\ihtriple{x =x_0-2n \wedge \neg x >{0}}{\code{C}}{Q} \using
 \icycwhileskiprule
\]
 \[\[ \[\[
 \ihtriple{\underline{x=x_0-2n}}{\code{C}}{Q}  \tikz \node (bud2) {};
    \justifies
 \ihtriple{\underline{x=x_0-2(n+1)}}{\code{C}}{Q}  \using \isubstrule 
 \]
 \justifies
 \ihtriple{x' = x_0-2n \land x'>0 \land \underline{x = x'-2}}{\code{C}}{Q}  \using \iconseqrule
 \]
 \justifies
 \ihtriple{\underline{x = x_0-2n} \land x>0}{\code{x := x - 2; C}}{Q}  \using  \icycassignrule
 \]
 \justifies
 \ihtriple{\underline{x = x_0-2n \wedge x>0}}{\code{C}}{Q}  \using \icycwhilerule
 \]
\justifies
\ihtriple{\underline{x = x_0-2n}}{\code{\while{x>0} {x := x - 2;}}}{\exists k.\ x =x_0-2k \wedge \neg x >{0}} \tikz \node (comp2) {};
\using \idisjtrule
\end{prooftree}
\begin{tikzpicture}[overlay]
\tikzstyle{every path}+=[thick, rounded corners=0.5cm,red,dashed]
 \draw (bud2.north east) -- ++(3.3,0) |- (comp2.north east) [->];
\end{tikzpicture}
\]
\end{footnotesize}
Note that, in this proof, the trace from companion to bud is \underline{underlined}
and the progressing point is at the instance of rule $\isubstrule$.
It can be confirmed
that the induced infinite path includes infinitely instances of symbolic application and infinite
progressing points.

\end{example}

\begin{lemma}
\label{lem_cyclic_incorrectness_invalid}
Let $\mathcal{P}$  be a pre-proof of $\ihtriple{P}{C}{Q}$ and suppose that $\ihtriple{P}{C}{Q}$ is invalid. Then there exists an infinite path $(\ihtriple{P_k}{C_k}{Q_k})_{k{\geq}0}$ in $\mathcal{P}$, beginning from $\ihtriple{P}{C}{Q}$, such that the following properties hold, for all $k \geq 0$:
\begin{enumerate}
\item For all $k \geq 0$, the triple $\ihtriple{P_k}{C_k}{Q_k}$ is invalid, meaning that
$\exists \sigma'.\ \sigma' \models Q_k$, there does not exist $\sigma.\ \sigma \models P_k$ and $\config{C_k}{\sigma} \exc^{m_k} \config{\cskip}{\sigma'}$
and the computation length $m_{k+1} \leq m_k$, and if the rule applied at $k$ is a symbolic execution rule then $m_{k+1} < m_k$; or 
 
 \item (in $\TRHL$ only)  If 
 there is a trace $(\trace_k)_{k \geq i}$ following a tail of the path $(\ihtriple{P_k}{C_k}{Q_k})_{k \geq i}$, then the sequence of natural numbers defined by $(\sigma(\trace(k)))_{k \geq i}$ is monotonically decreasing, and strictly decreases at every progress point of the trace.  
\end{enumerate}
\end{lemma}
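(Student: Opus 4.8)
The plan is to prove this exactly as the analogous Lemma~\ref{lem:cyclic_local_sound} for standard Hoare logic is proved: I construct the required path together with its auxiliary data --- an invalidity witness, a computation-length measure, and, in the $\TRHL$ case, whatever trace is under consideration --- descending from the root and moving from each node to one of its premises, and I verify the stated properties by a case analysis on the cyclic rule applied at each node of Figure~\ref{fig:cyclic_incorrectness_rules}.

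The first ingredient is a ``reverse local soundness'' for each rule: \emph{if the conclusion is invalid in $\PRHL$ (resp.\ $\TRHL$), then so is one of its triple-premises}. This is just the contrapositive of the preservation of validity by these rules, which for the shared rules is precisely the $\mathsf{post}$-set reasoning already carried out in Proposition~\ref{prop:rhl_sound}, and for the loop-unfolding and loop-exit rules uses the inclusions $\posts{C;\while{B}{C};C'}{P \wedge B} \subseteq \posts{\while{B}{C};C'}{P}$ and $\posts{C'}{P \wedge \neg B} \subseteq \posts{\while{B}{C};C'}{P}$ (immediate from the operational semantics), together with the matching preservation of convergence needed for the $\PRHL$ reading. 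The axiom $\crskiponerule$ has valid conclusions, so an invalid triple never arises from it and the construction never blocks at a leaf; at a bud we follow the back-link to its companion, whose judgement is identical and hence equally invalid. This makes the constructed path infinite and gives the invalidity clause of the lemma at every node.

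The second ingredient is the bookkeeping of the auxiliary data. I maintain a witness $\sigma'_k \models Q_k$ with $\sigma'_k \notin \posts{C_k}{P_k}$, updating it by the substitution $\theta$ at applications of $\isubstrule$ (and tracking the analogous effect of $\icycassignrule$) --- this is exactly the ``precursor'' operation of the Trace definition, so terms and traces are transported coherently. Using $\posts{C_k}{P_k}\converges$ I also maintain a terminating computation witnessing the failure of reachability, whose length is the measure $m_k$; inspecting each rule shows this length cannot increase and that a symbolic-execution rule consumes at least one of its transitions, hence strictly decreases it, which gives the $m_k$-conditions of clause~(1). For the $\TRHL$-only clause~(2), one reasons exactly as in cyclic arithmetic and the $\THL$ case of Lemma~\ref{lem:cyclic_local_sound}: across a non-progress step the trace term changes only by a precursor substitution and its valuation along the relevant computation does not increase, whereas a progress step is precisely the point at which the trace definition records a strictly smaller well-founded value in the precondition, forcing a strict decrease.

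The step I expect to be the real obstacle --- and the only place where the reverse setting genuinely departs from Lemma~\ref{lem:cyclic_local_sound} --- is coordinating, at the rules that shrink or re-select the precondition (above all the binary rule $\idisjtrule$, but also the loop rules $\icycwhilerule$, $\icycwhileskiprule$ and the consequence rule $\crconseqrule$), the choice of premise with the maintenance of this terminating computation. In standard Hoare logic the counterexample \emph{is} a concrete bad execution, which dictates which branch to descend into and carries the measure for free; here the counterexample is the \emph{non-reachability} of $\sigma'_k$, a universally quantified fact, while the progress made by symbolic-execution rules has to be extracted from a separate terminating computation supplied by convergence. One therefore has to build the path, the witness and this computation \emph{simultaneously and compatibly}, always keeping available a precondition-state from which a short enough terminating computation starts (and dealing with the degenerate case of an unsatisfiable precondition). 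Once this joint construction is in place, the remaining cases are routine and run in parallel with the standard Hoare logic development, with non-reachability of the target state playing the role that divergence of the program plays there.
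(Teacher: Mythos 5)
Your proposal follows essentially the same route as the paper's proof: a case analysis on the rules of Figure~\ref{fig:cyclic_incorrectness_rules} showing that invalidity --- unreachability of a fixed witness state $\sigma'$, transported by the precursor substitutions --- propagates from conclusion to a premise, because a computation of length $m$ reaching $\sigma'$ from the premise's precondition would extend to one of length $m+1$ from the conclusion's precondition (this contradiction is exactly your ``reverse local soundness'', and it is also what yields the strict decrease of the measure at symbolic execution rules), while clause~(2) is read off from the definition of progressing trace just as you suggest. The coordination problem you flag --- maintaining a terminating computation alongside the unreachability witness, in particular at \idisjtrule{} and the consequence and loop rules --- is a genuine issue, but the paper's own proof does not resolve it either: it carries out only the per-rule contradiction for the reachability component (for the assignment, conditional and loop-unfolding rules) and leaves the convergence and $m_k$-bookkeeping implicit, so on this point your proposal is no less complete than the paper's argument.
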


\begin{proof}
By structural induction on the rules in Fig. \ref{fig:cyclic_incorrectness_rules}
and for each rule while we prove (i) by contradiction, (ii) can be derived from
the definition of progressing trace.
Here for (i), we just show the more interesting cases.

\proofcase{\icycassignrule}
Supposing $\sigma' \models Q$ such that there does not exist $\sigma$ and ${m_k}$,
$\sigma \models P$ and $\config{x:=E;C}{\sigma} \exc^{m_k} \config{\cskip}{\sigma'}$.

We assume there exists $\sigma_1$ such that  $\sigma_1 \models P[x'/x] \wedge x=E[x'/x]$, and
$\config{C}{\sigma_1} \exc^{m} \config{\cskip}{\sigma'}$ (1).

Then, from $\sigma_1$, we construct $\sigma$ as follows. $\dom(\sigma) = \dom(\sigma_1) \setminus\{x'\}$, for all $y \neq x.\ \sigma(y)=\sigma_1(y)$, $\sigma(x) = \sigma_1(x')$.
Then, $\sigma \models P$ and the operational semantics gives us $\config{x:=E}{\sigma} \exc^1 \config{\cskip}{\sigma_1}$.

Therefore, there exists $\sigma \models P$, $\config{x:=E;C}{\sigma} \exc^{1} \config{\cskip;C}{\sigma_1}$. Using the operational semantics, we can obtain
$\config{x:=E;C}{\sigma} \exc^{1} \config{C}{\sigma_1}$. Then combining with
(1), $\config{x:=E;C}{\sigma} \exc^{m+1} \config{\cskip}{\sigma'}$. Contradiction.

\proofcase{\icyccondtrule}
Supposing $\sigma' \models Q$ such that there does not exist $\sigma$ and ${m_k}$,
$\sigma \models P$ and

$\config{\ifelse{B}{C_1}{C_2}; C'}{\sigma} \exc^{m_k} \config{\cskip}{\sigma'}$.

\noindent We assume the premise is valid, i.e., there exists $\sigma$ s.t.  $\sigma \models P \wedge B$, and
$\config{C_1;C'}{\sigma} \exc^{m} \config{\cskip}{\sigma'}$ (2).

From (2) we can imply  $\sigma \models B$, and then
the operational semantics gives us
$\config{\ifelse{B}{C_1}{C_2}}{\sigma} \exc^1 \config{C_1}{\sigma}$
and then $\config{\ifelse{B}{C_1}{C_2}; C'}{\sigma} \exc^1 \config{C_1; C'}{\sigma}$.
We now combine this execution with (2), and obtain
$\config{\ifelse{B}{C_1}{C_2}; C'}{\sigma} \exc^{m+1} \config{\cskip}{\sigma'}$.
Contradiction.

\proofcase{\icycwhilerule}
Supposing $\sigma' \models Q$ such that there does not exist $\sigma$ and ${m_k}$,
$\sigma \models P$ and

$\config{\while{B}{C} ; C'}{\sigma} \exc^{m_k} \config{\cskip}{\sigma'}$.

We assume the premise is valid, i.e., there exists $\sigma$ s.t.  $\sigma \models P \wedge B$, and
$\config{C ; \while{B}{C} ; C'}{\sigma} \exc^{m} \config{\cskip}{\sigma'}$ (3).

From (3) we can imply  $\sigma \models B$, and then
the operational semantics gives us
$\config{\while{B}{C}}{\sigma} \exc^1 \config{C; \while{B}{C}}{\sigma}$
and then $\config{\while{B}{C}; C'}{\sigma} \exc^1 \config{C; \while{B}{C}; C'}{\sigma}$.
We now combine this execution with (3), and obtain
$\config{\while{B}{C}; C'}{\sigma} \exc^{m+1} \config{\cskip}{\sigma'}$.
Contradiction.
\end{proof}

\begin{theorem}[Soundness]
\label{thm:cyclic_RHL_soundness}
If $\ihtriple{P}{C}{Q}$ has a cyclic proof in $\PRHL$ (resp. $\TRHL$) then it is valid in $\PRHL$ (resp. $\TRHL$).
\end{theorem}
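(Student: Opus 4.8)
The plan is to reuse, essentially verbatim, the structure of the soundness argument for cyclic Hoare logic (Theorem~\ref{thm:cyclic_HL_sound}): from a cyclic proof of an invalid triple, use Lemma~\ref{lem_cyclic_incorrectness_invalid} to manufacture a ``bad'' infinite path carrying a natural-number measure that descends at every step where the relevant rule does genuine work, and then invoke the global soundness condition to make that measure descend infinitely often along the path, contradicting well-foundedness of $\nat$. This is the usual argument by infinite descent for cyclic proof systems (cf.~\cite{Brotherston:PhD}).

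For $\PRHL$: suppose $\ihtriple{P}{C}{Q}$ has a cyclic proof $\mathcal{P}$ in $\PRHL$ but is invalid in $\PRHL$. By Lemma~\ref{lem_cyclic_incorrectness_invalid} there is an infinite path $(\ihtriple{P_k}{C_k}{Q_k})_{k\geq 0}$ in $\mathcal{P}$ beginning at the root, on which every triple is invalid and which carries a monotonically decreasing sequence $(m_k)_{k\geq 0}$ of natural numbers satisfying $m_{k+1}<m_k$ whenever the rule applied at step~$k$ is a symbolic execution rule (property~(i) of the Lemma). Since $\mathcal{P}$ satisfies the global soundness condition for $\PRHL$, this infinite path contains infinitely many applications of symbolic execution rules, so $(m_k)_{k\geq 0}$ strictly decreases infinitely often --- impossible for a sequence in $\nat$. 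Hence $\ihtriple{P}{C}{Q}$ is valid in $\PRHL$ after all.

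For $\TRHL$: recall (from the remark following Definition~\ref{defn:cyclic_proof_RHL}) that a cyclic proof in $\TRHL$ is also a cyclic proof in $\PRHL$. Suppose $\mathcal{P}$ is a cyclic proof in $\TRHL$ but $\ihtriple{P}{C}{Q}$ is invalid in $\TRHL$, i.e.\ there is $\sigma'\models Q$ with $\sigma'\notin\posts{C}{P}$. Lemma~\ref{lem_cyclic_incorrectness_invalid} again supplies an infinite bad path from the root. If property~(i) holds along it we conclude exactly as in the $\PRHL$ case, using that $\mathcal{P}$, being a $\TRHL$ proof, is also a $\PRHL$ proof and hence has infinitely many symbolic execution rules along the path. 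Otherwise property~(ii) applies: the $\TRHL$ global soundness condition furnishes an infinitely progressing trace $(\trace_k)_{k\geq i}$ along a tail of the path, and property~(ii) then makes the induced sequence $(\sigma(\trace_k))_{k\geq i}$ of natural numbers monotonically decreasing with a strict decrease at each of its infinitely many progress points --- again a contradiction. Therefore $\ihtriple{P}{C}{Q}$ is valid in $\TRHL$.

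The well-foundedness bookkeeping and the $\TRHL\Rightarrow\PRHL$ reduction are routine; all the real content is packed into Lemma~\ref{lem_cyclic_incorrectness_invalid}, and its proof (a case analysis over the rules of Figure~\ref{fig:cyclic_incorrectness_rules}) is where I expect the main obstacle to lie. The crux is showing that the witness of an invalid reverse triple --- a state $\sigma'$ satisfying the postcondition but not reachable (in the $\PRHL$ sense) from the precondition --- can always be pushed from the conclusion of a rule instance to one of its premises, with the accompanying length measure weakly decreasing in general and strictly decreasing at symbolic execution rules; the delicate rules are $\icycassignrule$ (which needs the fresh-variable surgery on states that reconstructs a pre-state from a post-state, mirroring the Floyd-style assignment axiom), the two conditional rules $\icyccondtrule$/$\icyccondfrule$ (selecting the correct premise to descend into), and $\icycwhilerule$/$\icycwhileskiprule$ (matching a single unfolding of the loop against the operational semantics, and choosing which branch to follow). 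For the $\TRHL$ direction there is the additional subtlety that the trace discipline, defined via the terms occurring in pre- and postconditions and the substitutions used by $\isubstrule$, must be shown to track an honestly descending chain of values in the operational model, which is precisely what property~(ii) of the Lemma records but which relies on the loop-variant information of $\iwhiletotalrule$ being faithfully re-encoded by $\icycwhilerule$ together with $\isubstrule$.
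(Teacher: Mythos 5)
Your proposal is correct and follows exactly the route the paper takes: the paper's own proof of Theorem~\ref{thm:cyclic_RHL_soundness} is literally ``Similar to Theorem~\ref{thm:cyclic_HL_sound}, using Lemma~\ref{lem_cyclic_incorrectness_invalid}'', i.e.\ build the bad infinite path via the lemma and contradict well-foundedness of $\nat$ using the global soundness condition (infinitely many symbolic executions for $\PRHL$, an infinitely progressing trace for $\TRHL$). You are also right that all the substantive work lives in Lemma~\ref{lem_cyclic_incorrectness_invalid}, whose case analysis over Figure~\ref{fig:cyclic_incorrectness_rules} is exactly where the paper places it.
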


\begin{proof}
Similar to Theorem~\ref{thm:cyclic_HL_sound}, using Lemma~\ref{lem_cyclic_incorrectness_invalid}.
\end{proof}

\begin{lemma}[Proof translation]
\label{lem:RHL_translation}
If $\ihtriple{P}{C}{Q}$ is provable in $\PRHL$ (resp. $\TRHL$) then, for all statements $C'$ and assertions $R$, there is a cyclic pre-proof of $\ihtriple{P}{C; C'}{R}$ in which all open leaves are occurrences of $\ihtriple{Q}{C'}{R}$:
\[\begin{array}{c@{\hspace{1cm}}c@{\hspace{1cm}}c}
\begin{prooftree}
\leadsto
\ihtriple{P}{C}{Q}
\end{prooftree}
& \Longrightarrow &
\begin{prooftree}
\ihtriple{Q}{C'}{R}
\leadsto
\ihtriple{P}{C;C'}{R}
\end{prooftree}
\end{array}\]
Moreover, any strongly connected subgraph of the pre-proof created by the translation satisfies the global soundness condition for $\PRHL$ (resp. $\TRHL$) cyclic proofs.
\end{lemma}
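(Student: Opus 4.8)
The plan is to mirror the proof of Lemma~\ref{lem:HL_translation}: we argue by structural induction on the axiomatic $\PRHL$ (resp.\ $\TRHL$) derivation of $\ihtriple{P}{C}{Q}$, with the continuation $C'$ and postcondition $R$ taken arbitrary, splitting on the last rule applied. The rules that have a Hoare-logic analogue --- $\iskiprule$, $\iassignrule$, $\isequencerule$, $\iconseqrule$, $\icondtrule$ and $\icondfrule$ --- are translated exactly as in the corresponding cases of Lemma~\ref{lem:HL_translation}: one applies a single continuation-style rule ($\crskiprule$, $\icycassignrule$, $\icyccondtrule$/$\icyccondfrule$) at the root, or threads the continuation through two nested appeals to the induction hypothesis (for $\isequencerule$), and so on. The only twist is that the reverse consequence rule $\iconseqrule$ \emph{weakens} preconditions and \emph{strengthens} postconditions, so in that case the two $\crconseqrule$ applications that bracket the induction hypothesis have their entailments oriented the other way. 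The two rules with no Hoare-logic counterpart, $\isubstrule$ and $\idisjtrule$, are handled by the identically named cyclic rules: for $\isubstrule$ I would apply the substitution $[t/z]$ uniformly to the pre-proof returned by the induction hypothesis (renaming the freshly-introduced variable of any $\icycassignrule$ instance if it clashes with $z$ or $t$, and choosing $z,t$ disjoint from $C'$ and $R$); for $\idisjtrule$ I would combine the two induction-hypothesis pre-proofs of $\ihtriple{P_1}{C;C'}{R}$ and $\ihtriple{P_2}{C;C'}{R}$ with a single cyclic $\idisjtrule$ at the root, tidying up the postcondition with $\crconseqrule$. None of these cases creates a cycle, so the ``strongly connected subgraph'' clause is preserved trivially.

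The three remaining cases are the loop rules, where the difference from Lemma~\ref{lem:HL_translation} is that reverse Hoare logic splits loop unfolding between the two rules $\icycwhileskiprule$ and $\icycwhilerule$ (exit versus iterate), whereas $\cycwhilerule$ bundles both. For $\iwhileskiprule$ the translated pre-proof is a single $\icycwhileskiprule$ step whose premise is reshaped into the required open leaf $\ihtriple{P\wedge\neg B}{C'}{R}$ by $\crconseqrule$; no cycle. For $\iwhilerule$ (used in $\PRHL$ only) I would first split the root on $B$ using $\idisjtrule$ and $\crconseqrule$; on the $\neg B$-branch close as in the $\iwhileskiprule$ case, which supplies the open leaf $\ihtriple{P\wedge\neg B}{C'}{R}$; and on the $B$-branch unfold once with $\icycwhilerule$, apply the induction hypothesis to $\ihtriple{P\wedge B}{C}{P}$ with continuation $\while{B}{C};C'$ and postcondition $R$, and back-link the resulting open leaves $\ihtriple{P}{\while{B}{C};C'}{R}$ to the root. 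The only cycle so created passes through an $\icycwhilerule$ instance --- a symbolic-execution rule --- which is precisely the $\PRHL$ global soundness condition; no progressing trace is needed, consistently with $\iwhilerule$ belonging to $\PRHL$ but not $\TRHL$.

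The principal case is $\iwhiletotalrule$ (for $\TRHL$), the reverse counterpart of $\cwhiletotalrule$, handled by the reverse counterpart of that case of Lemma~\ref{lem:HL_translation}, in the spirit of Example~\ref{ex:cyclic_PRHL}. The companion node carries the recorded measure value, schematically $\ihtriple{P\wedge t=n}{\while{B}{C};C'}{R}$ with $n$ fresh; the root is split on $B$ via $\idisjtrule$; the $\neg B$-branch yields the open leaf $\ihtriple{P\wedge\neg B}{C'}{R}$ via $\icycwhileskiprule$ and $\crconseqrule$; and the $B$-branch unfolds with $\icycwhilerule$, appeals to the induction hypothesis on $\ihtriple{P\wedge B\wedge t<n}{C}{P\wedge t=n}$ with continuation $\while{B}{C};C'$ and postcondition $R$, and at each resulting open leaf re-records the new, strictly smaller value $n'$ with $\crconseqrule$ and renames it back to $n$ with $\isubstrule$ before back-linking to the companion. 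The recorded values $n,n',n'',\dots$ then form a trace along the unique cycle of this strongly connected subgraph, progressing once per traversal, while the cycle also visits an $\icycwhilerule$ step, so the subgraph satisfies the $\TRHL$ global soundness condition. I expect the bookkeeping here to be the main obstacle: one must check that the trace term survives the induction-hypothesis sub-pre-proof unchanged (it does, as $n$ is fresh for $C$), that it genuinely progresses despite the orientation of $\iwhiletotalrule$ --- ``$t<n$'' in the precondition, ``$t=n$'' in the postcondition, dual to $\cwhiletotalrule$ --- that the $\idisjtrule$-split introduces no extra strongly connected components, and that the side conditions ``$n$ fresh'' and ``$n>0$'' are respected. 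Finally, exactly as in Lemma~\ref{lem:HL_translation}, one observes that every strongly connected subgraph of the complete translated pre-proof is confined to the translation of a single loop-rule application and is disjoint from the others, so the case analysis above establishes the ``moreover'' part.
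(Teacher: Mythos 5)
Your proposal takes essentially the same route as the paper's proof: structural induction on the axiomatic derivation, reuse of the Hoare-logic translation (Lemma~\ref{lem:HL_translation}) for the shared rules, a back-linked application of $\icycwhilerule$ for the $\iwhilerule$ case (sound for $\PRHL$ because every resulting cycle passes through a symbolic execution rule), and for $\iwhiletotalrule$ a case split separating loop exit (via $\icycwhileskiprule$, yielding the open leaf $\ihtriple{P\wedge\neg B}{C'}{R}$) from a once-unfolded iteration whose leaves are renamed by $\isubstrule$ and back-linked to the companion, giving a trace on the recorded measure that progresses once per traversal. The differences are cosmetic --- the paper arranges the $\iwhilerule$ case without a $\idisjtrule$ split, back-linking all leaves of the induction-hypothesis sub-proof so that no open leaf remains, and splits the $\iwhiletotalrule$ root on $t=n$ versus $t<n$ rather than on $B$ --- and you are in fact more explicit than the paper about the $\isubstrule$ and $\idisjtrule$ cases, which its proof omits.
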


\begin{proof}
We proceed by structural induction on the Hoare logic proof of $\htriple{P}{C}{Q}$,
distinguishing cases on the last rule applied in the proof and assuming arbitrary $C'$ and $R$. The rules $\iskiprule$ and $\iwhileskiprule$ are trivial, while the cases of $\iassignrule$, $\isequencerule$, $\iconseqrule$ are similar to their counterparts in standard Hoare logic (Lemma~\ref{lem:HL_translation}).

\proofcase{\icondtrule, \icondfrule} These cases are symmetric.  For $\icondtrule$, the transformation is as follows:
\[\begin{array}{c@{\hspace{0.7cm}}c@{\hspace{0.7cm}}c}
\begin{prooftree}
\[\leadsto
\ihtriple{P \wedge B}{C_1}{Q}\]
\justifies
\ihtriple{P \wedge B}{\ifelse{B}{C_1}{C_2}}{Q}
\using \icondtrule
\end{prooftree}
& \Longrightarrow &
\begin{prooftree}
\[
\ihtriple{Q}{C}{R}
\leadsto
\ihtriple{P\wedge B}{C_1; C}{R}
  \using \;\indhyp
\]
\justifies
\ihtriple{P\wedge B}{\ifelse{B}{C_1}{C_2}; C}{Q}
\using \icyccondtrule
\end{prooftree}
\end{array}\]

\proofcase{\iwhilerule}
\[\begin{array}{c@{\hspace{0.7cm}}c@{\hspace{0.7cm}}c}
\begin{prooftree}
\[\leadsto
\ihtriple{P \wedge B}{C}{P} \]
\justifies
\ihtriple{P }{\while{B}{C}}{P \wedge \neg B}
\using \iwhilerule
\end{prooftree}
& \Longrightarrow &
\begin{prooftree}
\[
\ihtriple{P}{\while{B}{C}; C'}{R} \tikz \node (bud5) {}; \using \isubstrule
\leadsto
\ihtriple{P \land B}{C; \while{B}{C}; C'}{R}
\using \;\indhyp \]
\vspace{0.1mm}
\justifies
\ihtriple{P}{\while{B}{C}; C'}{R} \tikz \node (comp5) {};
\using \icycwhilerule
\begin{tikzpicture}[overlay]
\tikzstyle{every path}+=[thick, rounded corners=0.5cm,red,dashed]
\draw (bud5.north east) -- ++(2.7,0) |- (comp5.north east) [->];
\end{tikzpicture}
\end{prooftree}
\end{array}\]
In the pre-proof above, by using the induction hypothesis, a pre-proof of $\ihtriple{P \land B}{C; \while{B}{C}; C'}{R}$ contains open leaves which are occurrences of $\ihtriple{P}{\while{B}{C}; C'}{R}$.
As those occurrences are involved in a back-link, the pre-proof has no open leaves. (In this case, as the first condition in global soundness holds, this pre-proof is a $\PRHL$ cyclic proof.)

\proofcase{\iwhiletotalrule}
By assumption, we have a proof of the form:
    \[\begin{prooftree}
      \[\leadsto
        \ihtriple{P \wedge B \wedge t<n}{C}{P \wedge t=n} \]
        \justifies
        \ihtriple{P \wedge t=n_0}{\while{B}{C}}{P \wedge \neg B}
        \using \cwhiletotalrule
    \end{prooftree}\]
    where $n'$ is a fresh variable.
     We derive a cyclic pre-proof of \ihtriple{P}{\while{B}{C; C'}}{R} as follows.
     \[\begin{prooftree}
     \[\[
     \ihtriple{P \wedge \neg B}{C'}{R}
     \justifies
      \ihtriple{P \wedge t=n \wedge \neg B}{C'}{R}  \using \crconseqrule
      \]
     \justifies
      \ihtriple{P \wedge t=n}{\while{B}{C};C'}{R} \using \icycwhileskiprule 
     \]\quad
     \[
      \[\[
      \ihtriple{P \wedge t = n_0}{\while{B}{C; C'}}{R} \tikz \node (bud5) {};
         \justifies
         \ihtriple{P \wedge t = n}{\while{B}{C; C'}}{R} \using \isubstrule \]
        \leadsto
        \ihtriple{P  \wedge t < n \wedge B }{C; \while{B}{C}; C'}{R}
        \using \;\indhyp \]
        \justifies
        \ihtriple{P \wedge  t  < n}{\while{B}{C}; C'}{R}
        \using \icycwhilerule
        \]
        \justifies
        \ihtriple{P \wedge t=n_0}{\while{B}{C}; C'}{R} \tikz \node (comp5) {};
        \using \idisjtrule
        \begin{tikzpicture}[overlay]
          \tikzstyle{every path}+=[thick, rounded corners=0.5cm,red,dashed]
    \draw (bud5.north east) -- ++(2.6,0) |- (comp5.north east) [->];
    \end{tikzpicture}
    \end{prooftree}\]
  In the pre-proof above, the left hand premise
  includes an open leaf of the permitted form $\ihtriple{P \land \neg B}{C'}{R}$.
  The right hand premise,
  by applying the induction hypothesis,
  $\ihtriple{P \land  t < n \wedge B}{C; \while{B}{C; C'}}{R}$ has
  a cyclic pre-proof
    in which all open leaves are of the form
  $\ihtriple{P \land t = n}{\while{B}{C}; C'}{R}$.
  In each of these open leaves, we introduce a case split at $\idisjtrule$
  with base case $t=n$ and step case $t < n$.
  Similar to the translation in Hoare logic, in this proof
  we have a trace on the values of $t$ from the companion node  to each of the buds, in which
  the progress point is at point when we ``jump'' from $n$ to $n_0 < n$ (rule $\isubstrule$).
  Note that this infinite path also includes infinitely
many symbolic execution rule applications of $\icycwhilerule$.
So, the global conditions of a cyclic proof in $\TRHL$ holds in this infinite path.
\end{proof}

\begin{theorem}[Relative completeness]
\label{thm:RHL_translation}
If $\htriple{P}{C}{Q}$ is provable in $\PRHL$ (resp. $\TRHL$) then it has a cyclic proof in $\PRHL$ (resp. $\TRHL$).
\end{theorem}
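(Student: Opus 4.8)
The plan is to reproduce, essentially verbatim, the argument used for Theorem~\ref{thm:HL_translation}, with the proof translation lemma for reverse Hoare logic (Lemma~\ref{lem:RHL_translation}) playing the role that Lemma~\ref{lem:HL_translation} played there. First I would instantiate Lemma~\ref{lem:RHL_translation} with $C' = \cskip$ and $R = Q$, using the harmless identification of $C;\cskip$ with $C$. This immediately produces a cyclic pre-proof $\mathcal{P}$ of $\ihtriple{P}{C}{Q}$ in $\PRHL$ (resp. $\TRHL$) in which every open leaf is an occurrence of $\ihtriple{Q}{\cskip}{Q}$.

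Next I would close each of these open leaves by a single application of the axiom $\crskiponerule$. Since $\crskiponerule$ has no premises, this step introduces no new open leaves and no new edges, and in particular cannot create any new cycles, so the result is a genuine pre-proof (in the sense of Definition~\ref{defn:pre_proof}) of $\ihtriple{P}{C}{Q}$, which I continue to call $\mathcal{P}$.

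It then remains to verify that $\mathcal{P}$ satisfies the appropriate global soundness condition from Definition~\ref{defn:cyclic_proof_RHL}. Since $\mathcal{P}$ is a finite derivation tree equipped with backlinks, any infinite path through $\mathcal{P}$ must from some point onwards remain inside a single strongly connected subgraph of $\mathcal{P}$, namely one of the cycles formed by the backlinks. By the ``moreover'' clause of Lemma~\ref{lem:RHL_translation}, every strongly connected subgraph created by the translation already satisfies the global soundness condition for $\PRHL$ (resp. $\TRHL$) cyclic proofs: in the partial case it contains infinitely many symbolic execution rule applications, and in the total case it additionally carries an infinitely progressing trace along a tail. Hence every infinite path of $\mathcal{P}$ inherits this property, so $\mathcal{P}$ is a cyclic proof in $\PRHL$ (resp. $\TRHL$), which is what we want.

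The genuine work is all hidden in Lemma~\ref{lem:RHL_translation} itself; for the theorem the only point needing a little care is the observation that the cyclic structure of $\mathcal{P}$ is confined to the subgraphs produced by translating $\iwhilerule$ (in $\PRHL$) and $\iwhiletotalrule$ (in $\TRHL$), and that these subgraphs are built disjointly by the inductive construction, so no infinite path can splice together fragments of two different such subgraphs, and each one is individually sound by the lemma. This is precisely the analogue of the corresponding remark in the proof of Theorem~\ref{thm:HL_translation}, and I expect checking it to be the main (and still rather minor) obstacle; in the $\TRHL$ case one should in particular double-check that the extra $\idisjtrule$ case split introduced in the $\iwhiletotalrule$ translation does not open up an infinite path avoiding the progressing trace on $t$.
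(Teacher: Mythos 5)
Your proposal is correct and follows essentially the same route as the paper, whose proof of this theorem is literally ``Similar to Theorem~\ref{thm:HL_translation}, using Lemma~\ref{lem:RHL_translation}'': instantiate the translation lemma with $C'=\cskip$ and $R=Q$, close the open leaves $\ihtriple{Q}{\cskip}{Q}$ with \crskiponerule, and appeal to the ``moreover'' clause together with the observation that the only strongly connected subgraphs arise (disjointly) from the translations of \iwhilerule{} and \iwhiletotalrule. Your closing caution about the $\idisjtrule$ case split in the \iwhiletotalrule{} translation is a reasonable point of care but is already discharged inside Lemma~\ref{lem:RHL_translation}, so nothing further is needed here.
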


\begin{proof}
Similar to Theorem~\ref{thm:HL_translation}, using Lemma~\ref{lem:RHL_translation}.
\end{proof}

\section{Conclusions and future work}
\label{sec:conclusion}

This paper presents formulations of the partial and total versions of Hoare logic and its reverse, semantically and as axiomatic and cyclic proof systems.  We observe in particular that
\begin{itemize}
\item the total versions of the logics are special cases of the corresponding partial version;
\item partial reverse Hoare logic $\PRHL$ and total Hoare logic $\THL$ are semantic duals, as are $\TRHL$ and $\PHL$;
\item the partial versions of the logics are proof-theoretically similar, as are the total versions, with their cyclic proof systems sharing a similar soundness condition;
\item there is a natural translation from standard to cyclic proofs for each of the logics.
\end{itemize}

We must make a frank admission: Very little of what we present here is truly new, in that it could in principle have been distilled from previous works on Hoare logic, reverse Hoare logic and cyclic proof.  Of course, the partial and total variants of Hoare logic have been extensively studied for decades~\cite{Hoare:69,Cook:1978,Apt:CACM:1981,Apt-Olderog:19} and their formulations as cyclic proof systems can largely be inferred from previous works, in particular, on Hoare-style proofs in separation logic~\cite{Brotherston-Bornat-Calcagno:08,Brotherston-Gorogiannis:14}.  Meanwhile, reverse Hoare logic and its extension to incorrectness logic has been studied in~\cite{Edsko:SEFM:2011,OHearn:19,Le:OOPSLA:2022}, both semantically and axiomatically, although not to our knowledge in cyclic proof form.  Moreover, Verscht and Kaminski provide a semantic taxonomy of Hoare-like logics covering many more possibilities than the four we examine here~\cite{Verscht-Kaminski:25}.  We see our main contribution as being primarily one of compiling and formulating these logics in such a way that their proof-theoretic as well as their semantic relationships become clear.

We can additionally mention a few minor novelties of the present work.  First, the partial variant of reverse Hoare logic we consider here, $\PRHL$, does not seem to be well known, apparently not being among the logics in the Verscht-Kaminski taxonomy~\cite{Verscht-Kaminski:25} and may even be somewhat new.  A very recent short paper by Verscht et al~\cite{Verscht-etal:25} describes ``partial incorrectness logic'' which \emph{also} does not seem at first sight the same thing as our $\PRHL$; however, we believe that our version arises quite naturally as a semantic dual of total correctness (Defn.~\ref{defn:valid_triples}) and a proof-theoretic dual of partial correctness (Figure~\ref{fig:ihoare_rules}).  Second, we formulate cyclic proof systems for reverse Hoare logic, observing that the soundness conditions required to make them work correctly are natural analogues of their equivalents in standard Hoare logic.  Lastly, our essentially uniform translation of standard Hoare logic and reverse Hoare logic proofs into cyclic proofs is quite pleasant and may provide, we hope, some minor technical interest.

Potential directions for future work might include, for example, the implementation of our cyclic proof systems within a suitable platform such as the Cyclist theorem prover~\cite{Brotherston-Gorogiannis-Petersen:12}, or the extension of existing systems of incorrectness logic with our cyclic proof principles~\cite{OHearn:19,Le:OOPSLA:2022}.


\begin{thebibliography}{10}

\bibitem{Afshari-Leigh:17}
Bahareh Afshari and Graham~E. Leigh.
\newblock Cut-free completeness for modal {$\mu$}-calculus.
\newblock In {\em Proceedings of {LICS}-32}, pages 1--12. IEEE, 2017.\\
\url{https://doi.org/10.1109/LICS.2017.8005088}

\bibitem{Apt-Olderog:19}
Krzysztof Apt and Ernst-Rüdiger Olderog.
\newblock Fifty years of {H}oare’s logic.
\newblock {\em Formal Aspects of Computing}, 31(6):751--807, 2019.\\
\url{https://doi.org/10.1007/s00165-019-00501-3}

\bibitem{Apt:CACM:1981}
Krzysztof~R. Apt.
\newblock Ten years of {H}oare's logic: A survey-part {I}.
\newblock {\em ACM Trans. Program. Lang. Syst.}, 3(4):431--483, 1981.\\
\url{https://doi.org/10.1145/357146.357150}

\bibitem{Brotherston:05}
James Brotherston.
\newblock Cyclic proofs for first-order logic with inductive definitions.
\newblock In {\em Proceedings of {TABLEAUX-14}}, pages 78--92. Springer, 2005.\\
\url{https://doi.org/10.1007/11554554_8}

\bibitem{Brotherston:PhD}
James Brotherston.
\newblock {\em Sequent Calculus Proof Systems for Inductive Definitions}.
\newblock PhD thesis, University of Edinburgh, November 2006.\\
\url{http://hdl.handle.net/1842/1458}

\bibitem{Brotherston:07}
James Brotherston.
\newblock Formalised inductive reasoning in the logic of bunched implications.
\newblock In {\em Proceedings of {SAS}-14}, volume 4634 of {\em LNCS}, pages
  87--103. Springerg, 2007.\\
  \url{https://doi.org/10.1007/978-3-540-74061-2_6}

\bibitem{Brotherston-Bornat-Calcagno:08}
James Brotherston, Richard Bornat, and Cristiano Calcagno.
\newblock Cyclic proofs of program termination in separation logic.
\newblock In {\em Proceedings of {POPL}-35}. ACM, 2008.\\
\url{https://doi.org/10.1145/1328897.1328453}

\bibitem{Brotherston-Gorogiannis:14}
James Brotherston and Nikos Gorogiannis.
\newblock Cyclic abduction of inductively defined safety and termination
  preconditions.
\newblock In {\em Proceedings of {SAS-21}}. Springer, 2014.\\
\url{https://doi.org/10.1007/978-3-319-10936-7_5}

\bibitem{Brotherston-Gorogiannis-Petersen:12}
James Brotherston, Nikos Gorogiannis, and Rasmus~L. Petersen.
\newblock A generic cyclic theorem prover.
\newblock In {\em Proceedings of {APLAS-10}}. Springer, 2012.\\
\url{https://doi.org/10.1007/978-3-642-35182-2_25}

\bibitem{Brotherston-Simpson:11}
James Brotherston and Alex Simpson.
\newblock Sequent calculi for induction and infinite descent.
\newblock {\em Journal of Logic and Computation}, 21(6):1177--1216, 2011.\\
\url{https://doi.org/10.1093/logcom/exq052}

\bibitem{Cook:1978}
Stephen~A. Cook.
\newblock Soundness and completeness of an axiom system for program
  verification.
\newblock {\em SIAM J. Comput.}, 7(1):70--90, 1978.\\
\url{https://doi.org/10.1137/0207005}

\bibitem{Dam-Gurov:02}
Mads Dam and Dilian Gurov.
\newblock $\mu$-calculus with explicit points and approximations.
\newblock {\em Journal of Logic and Computation}, 12(2):255--269, 2002.\\
\url{https://doi.org/10.1093/logcom/12.2.255}

\bibitem{Das-Pous:17}
Anupam Das and Damien Pous.
\newblock A cut-free cyclic proof system for {K}leene algebra.
\newblock In {\em Proceedings of {TABLEAUX}}, pages 261--277. Springer, 2017.\\
\url{https://doi.org/10.1007/978-3-319-66902-1_16}

\bibitem{Edsko:SEFM:2011}
Edsko de~Vries and Vasileios Koutavas.
\newblock Reverse hoare logic.
\newblock In {\em Proceedings of {SEFM}}, pages 155--171. Springer, 2011.\\
\url{https://doi.org/10.1007/978-3-642-24690-6_12}

\bibitem{Docherty-Rowe:19}
Simon Docherty and Reuben~N.S. Rowe.
\newblock A non-wellfounded, labelled proof system for propositional dynamic
  logic.
\newblock In {\em Proceedings of {TABLEAUX}}, pages 335--352. Springer, 2019.\\
\url{https://doi.org/10.1007/978-3-030-29026-9_19}

\bibitem{Tellez-Brotherston:19}
Gadi~Tellez Espinosa and James Brotherston.
\newblock Automatically verifying temporal properties of programs with cyclic
  proof.
\newblock {\em Journal of Automated Reasoning}, 64:555--578, 2019.\\
\url{https://doi.org/10.1007/s10817-019-09532-0}

\bibitem{Floyd:67}
Robert~W. Floyd.
\newblock Assigning meanings to programs.
\newblock In {\em Proc.\ Amer.\ Math.\ Soc.}, volume~19 of {\em Symposia in
  Applied Mathematics}, pages 19--31, 1967.\\
  Available online at: \url{https://people.eecs.berkeley.edu/~necula/Papers/FloydMeaning.pdf}

\bibitem{Hoare:69}
C.~A.~R. Hoare.
\newblock An axiomatic basis for computer programming.
\newblock {\em Commun. ACM}, 12(10):576–580, 1969.\\
\url{https://doi.org/10.1145/363235.363259}

\bibitem{Le:OOPSLA:2022}
Quang~Loc Le, Azalea Raad, Jules Villard, Josh Berdine, Derek Dreyer, and
  Peter~W. O'Hearn.
\newblock Finding real bugs in big programs with incorrectness logic.
\newblock In {\em Proceedings of {OOPSLA}}. ACM, 2022.\\
\url{https://doi.org/10.1145/3527325}

\bibitem{McMillan:08}
Kenneth~L. McMillan.
\newblock Quantified invariant generation using an interpolating saturation
  prover.
\newblock In {\em Proceedings of {TACAS}-14}, pages 413--427. Springer, 2008.\\
\url{https://doi.org/10.1007/978-3-540-78800-3_31}

\bibitem{Niwinski-Walukiewicz:97}
Damian Niwi\'nski and Igor Walukiewicz.
\newblock Games for the $\mu$-calculus.
\newblock {\em Theoretical Computer Science}, 163:99--116, 1997.\\
\url{https://doi.org/10.1016/0304-3975(95)00136-0}

\bibitem{OHearn:19}
Peter~W. O'Hearn.
\newblock Incorrectness logic.
\newblock 2019.\\
\url{https://doi.org/10.1145/3371078}

\bibitem{Simpson:17}
Alex Simpson.
\newblock Cyclic arithmetic is equivalent to {P}eano arithmetic.
\newblock In {\em Proceedings of {FoSSaCS}}, pages 283--300. Springer, 2017.\\
\url{https://doi.org/10.1007/978-3-662-54458-7_17}

\bibitem{Stirling-walker:91}
C.~Stirling and D.~Walker.
\newblock Local model checking in the modal $\mu$-calculus.
\newblock {\em Theoretical Computer Science}, 89:161--177, 1991.\\
\url{https://doi.org/10.1016/0304-3975(90)90110-4}

\bibitem{Verscht-Kaminski:25}
Lena Verscht and Benjamin~Lucien Kaminski.
\newblock A taxonomy of {H}oare-like logics: Towards a holistic view using
  predicate transformers and {K}leene algebras with top and tests.
\newblock 2025.\\
\url{https://doi.org/10.1145/3704896}

\bibitem{Verscht-etal:25}
Lena Verscht, Ānrán Wáng, and Benjamin~Lucien Kaminski.
\newblock Partial incorrectness logic, 2025.\\
\url{https://arxiv.org/abs/2502.14626}

\end{thebibliography}

%

\end{document}